\documentclass[runningheads]{llncs}
\usepackage{bbm}
\usepackage{color}
\usepackage{enumerate}
\usepackage{mathrsfs}
\usepackage{xspace}
\usepackage{graphicx}
\usepackage{ifthen}
\usepackage{amsmath}
\usepackage{algorithm}
\usepackage{comment}
\usepackage{pstricks}
\usepackage{lmodern}
\usepackage[T1]{fontenc}

{\catcode`[=\active
\gdef\activatealgo{\let\>\qquad\let\=\leftarrow
        \catcode`[=\active\def[##1]{\ifmmode \lbrack##1]\else{\bf ##1}\fi}}
}

\newtheorem{observation}{Observation}
\newtheorem{prop}{Property}

\spnewtheorem*{main}{The Main Theorem}{\bfseries \upshape}{\itshape}
\spnewtheorem*{rrule}{Reduction Rule}{\bfseries \upshape}{\itshape}

\newcommand{\FPT}{\ensuremath{\text{FPT}}\xspace}

\newcommand{\Tlarge}{\ensuremath{\mathcal{T}_{\text{large}}}}
\newcommand{\Tsmall}{\ensuremath{\mathcal{T}_{\text{small}}}}

\newcommand{\Whard}[1]{\ensuremath{\text{W}[#1]\text{-hard}}}

\newcommand{\NPhard}{\ensuremath{\text{NP-hard}\xspace}}

\newcommand{\set}[1]{\ensuremath{\left \{ #1 \right \}}}

\newcommand{\tminor}{\ensuremath{\preceq_{\mathit{TM}}}}

%
\renewcommand{\qed}{\hfill \rule{2mm}{2mm}}

\newcommand{\tw}[1]{\mathop\mathbf{tw}(#1)}
\newcommand{\cw}[1]{\mathop\mathbf{cw}(#1)}
\newcommand{\rw}[1]{\mathop\mathbf{rw}(#1)}
\newcommand{\poly}{\mathop\mathit{poly}}

\newcommand{\equipi}[3]{\ensuremath{#1 \equiv_{\Pi,#3} #2}}


\newcommand*{\ie}{i.e.\@\xspace}

\newcommand*{\wrt}{w.r.t.\@\xspace}
\makeatletter
\newcommand*{\etc}{%
    \@ifnextchar{.}%
        {etc}%
        {etc.\@\xspace}%
}
\makeatother



\renewcommand\note[1]{}

\newcommand{\degav}{\ensuremath{d_{\text{av}}}}


\title{Linear Kernels on Graphs Excluding Topological Minors
\thanks{Research funded by DFG-Project RO 927/12-1 titled 
``Theoretical and Practical Aspects of Kernelization.''}
\thanks{A simpler proof of the results of this paper along with additional results 
appears in http://arxiv.org/abs/1207.0835}}
\author{Alexander Langer \and Felix Reidl \and Peter Rossmanith \and Somnath Sikdar}
 \institute{\normalsize
  Theoretical Computer Science, RWTH Aachen University, Germany \\
\email{\{langer|reidl|rossmani|sikdar\}@cs.rwth-aachen.de}}

\begin{document}
\maketitle

\begin{abstract}
We show that problems that have finite integer index and satisfy
a requirement we call \emph{treewidth-bounding} admit linear kernels
on the class of $H$-topological-minor free graphs, for an arbitrary
fixed graph~$H$. This builds on earlier results by Bodlaender et al.\ 
on graphs of bounded genus~\cite{BFLPST09} and by Fomin et al.\
on $H$-minor-free graphs~\cite{FLST10}. Our framework encompasses 
several problems, the prominent ones being \textsc{Chordal Vertex Deletion}, 
\textsc{Feedback Vertex Set} and \textsc{Edge Dominating Set}.
\end{abstract}

\section{Introduction}

Parameterized complexity deals with algorithms for decision problems whose
instances consist of a secondary measurement known as the \emph{parameter}.  
A major goal in parameterized complexity is to investigate whether a problem 
with parameter~$k$ admits an algorithm with running time $f(k) \cdot n^{O(1)}$.
Parameterized problems that admit such algorithms are called 
\emph{fixed-parameter tractable} and the class of all such problems is denoted \FPT.

A closely related concept is that of \emph{kernelization}. A kernelization 
algorithm for a parameterized problem takes as instance~$(x,k)$ of the problem
and, in time polynomial in $|x| + k$, outputs an equivalent instance~$(x',k')$ 
such that $|x'|, k' \le g(k)$, for some function~$g$. The function~$g$ 
is called the \emph{size of the kernel} and may be viewed as a measure
of the ``compressibility'' of a problem using polynomial-time preprocessing rules. 
By now it is a folklore result in the area that a decidable problem is 
fixed-parameter tractable iff it has a kernelization algorithm. What makes kernelization 
interesting is that many problems have a small kernel, meaning that the 
function~$g$ is polynomial or some times even linear.   

An important research direction is to investigate the parameterized 
complexity of problems that are \Whard{1}\footnote{The counterpart of \NPhard\ 
in parameterized complexity.} in general in special graph classes. It turns out (not surprisingly) 
that several \Whard{1} problems are not only in \FPT in special graph classes but admit 
linear kernels. A celebrated result is the linear kernel for \textsc{Dominating Set}
in planar graphs by Alber, Fellows, and Niedermeier~\cite{AFN04}. This paper prompted 
an explosion of research papers on linear kernels in planar graphs, including 
\textsc{Dominating Set}~\cite{AFN04,CFKX07}, \textsc{Feedback Vertex Set}~\cite{BP08},
\textsc{Cycle Packing}~\cite{BPT08}, \textsc{Induced Matching}~\cite{MS09,KPSX11},
\textsc{Full-Degree Spanning Tree}~\cite{GNW10} and \textsc{Connected Dominating Set}~\cite{LMS11b}.

Guo and Niedermeier showed that several problems that admit a 
``distance property'' admit linear kernels in planar graphs~\cite{GN07a}.
This result was subsumed by that of Bodlaender, Fomin, Lokshtanov, Penninkx, Saurabh and Thilikos in~\cite{BFLPST09} who
provided a meta-theorem for problems to have a linear kernel on graphs of bounded genus
(a strictly larger class than planar graphs). Later Fomin, Lokshtanov, Saurabh and Thilikos 
in~\cite{FLST10} extended these results for bidimensional problems to an even 
larger graph class, namely, $H$-minor-free and apex-minor-free graphs. The last two papers have provided 
deep insight into the circumstances under which problems admit linear (and polynomial)
kernels in sparse graphs. The property of \emph{finite integer index},
introduced by Bodlaender and van Antwerpen-de Fluiter~\cite{BvF01}, has emerged to be
of central importance to the aforementioned results: it guarantees the existence of
small gadgets that ``simulate'' large portions of the instance satisfying certain properties.
Finally note that a recent result by Fomin, Lokshtanov, Saurabh and Thilikos
now provides a linear kernel for \textsc{Dominating Set} and \textsc{Connected Dominating Set}
on $H$-minor-free graphs~\cite{FLST12}.

In this paper, we partially extend the results of Fomin et al.\ in~\cite{FLST10}
by giving a meta-result for linear kernels on $H$-topological-minor-free
graphs. More specifically, we show that any graph problem that has \emph{finite integer index} and   
is \emph{treewidth-bounding} has a linear kernel in $H$-topological-minor-free graphs. 
Informally, we call a problem treewidth-bounding if there exists a vertex set of small size 
whose deletion reduces the treewidth of the remaining graph to within a constant. 

Its worthwhile to note that Marx and Grohe have recently developed a decomposition theorem 
for $H$-topological-minor-free graphs along the same lines as the one 
for $H$-minor-free graphs~\cite{GM11}. As the latter proved to be extremely useful in designing
linear kernels for $H$-minor-free graphs, it would be very interesting to see
how one can apply this structure theorem to obtain kernels on graphs excluding
a fixed topological minor. Note, however, that for the results of this paper
we do not make use of this structure theorem.

The rest of the paper is organized as follows: Section~\ref{sec:Preliminaries} contains
the basic definitions and some important aspects of $H$-topological-minor-free graphs as
well as a key lemma used extensively in the proof of the main result. In Section~\ref{sec:MainResult} 
we present our main result, its implications in Section~\ref{sec:Implications}. Finally 
Section~\ref{sec:Conclusion} contains the conclusion and some open questions.

\section{Preliminaries}\label{sec:Preliminaries}

We use standard graph-theoretic notation (see~\cite{Die10} for any undefined terminology). 
Let~$e = xy$ be an edge in a graph~$G=(V,E)$. By~$G / e$, we denote the
graph obtained by \emph{contracting} the edge~$e$ into a new vertex~$v_e$, 
and making it adjacent to all the former neighbors of~$x$ and~$y$. 
A \emph{minor} of~$G$ is a graph obtained from a subgraph of~$G$
by contracting zero or more edges. A family~$\mathcal{F}$ of graphs 
is said to be \emph{minor-closed} if for all~$G \in \mathcal{F}$, every minor 
of~$G$ is contained in~$\mathcal{F}$. A graph~$G$ 
is said to be \emph{$H$-minor-free} if no minor of~$G$ is isomorphic to~$H$.
The class of $H$-minor-free graphs can be easily seen to be minor-closed. Note that
if~$G$ is $H$-minor-free then it is also $K_r$-minor-free, where~$r = |V(H)|$. 
Therefore no $H$-minor-free graph contains a clique with~$|V(H)|$ or more vertices. 
If a chordal graph~$G$ is $H$-minor-free, then every bag of the natural tree decomposition 
of~$G$ is a maximal clique of size at most~$r$.

Given a graph~$G=(V,E)$, a \emph{tree-decomposition of~$G$} is a pair~$(T, \mathcal{X})$,
where~$T$ is a tree and $\mathcal X = \{X_i \subseteq V(G) \mid i \in V(T)\}$ is a collection 
of vertex sets of~$G$ with one set for each node of the tree~$T$ such that the following hold:
\begin{enumerate}
\item $\bigcup_{i \in V(T)} X_i = V(G)$;
\item for every edge~$e = uv$ in~$G$, there exists~$i \in V(T)$ such that~$u,v \in X_i$;
\item for each vertex~$u \in V(G)$, the set of nodes~$\{i \in V(T) \mid u \in X_i\}$ 
induces a subtree. 
\end{enumerate}  
The vertices of the tree~$T$ are usually referred to as \emph{nodes} and the sets~$X_i$
are called \emph{bags}. The \emph{width} of a tree-decomposition is the size of the largest
bag minus one. The \emph{treewidth} of~$G$, denoted~$\tw{G}$, is the smallest width of a
tree-decomposition of~$G$.

Given a subtree $T' \subseteq T$ of a tree-decomposition 
$\mathscr{T} = (T, \mathcal{X})$ of a graph~$G$, the \emph{bags of $T'$} 
refer to the bags in $\mathcal{X}$ that correspond to the nodes in $T'$. 
We let $G[T']$ denote the graph induced by the vertices that occur in the bags
of $T'$. 

\subsection{Protrusions, $t$-Boundaried Graphs and Finite Integer Index}
In this subsection, we restate the definitions and results required for using 
the protrusion machinery used extensively in~\cite{BFLPST09,FLST10}. 

Given a graph~$G=(V,E)$ and a set~$W \subseteq V$, we define~$\partial_G(W)$
as the set of vertices in~$W$ that have a neighbor in~$V \setminus W$. For a
set~$W \subseteq V$ the neighborhood of~$W$ is $N_G(W) = \partial_G(V \setminus W)$.
Subscripts are omitted when it is clear which graph is being referred to.

\begin{definition}[$r$-protrusion~\cite{BFLPST09}]
Given a graph~$G=(V,E)$, we say that a set~$W \subseteq V$ is an $r$-protrusion
of~$G$ if $|\partial(W)| \leq r$ and $\tw{G[W]} \le r$. 
\end{definition}
If~$W$ is an $r$-protrusion, the vertex set~$W' = W \setminus \partial(W)$ is the 
\emph{restricted protrusion of~$W$}.

A \emph{$t$-boundaried graph} is a graph~$G=(V,E)$ with~$t$ distinguished vertices
labeled~$1$ through~$t$. The set of labeled vertices is denoted by~$\partial(G)$ 
and is called the \emph{boundary} or the \emph{terminals} of~$G$.  
For $t$-boundaried graphs $G_1$ and $G_2$, we let~$G_1 \oplus G_2$ denote the
graph obtained by taking the disjoint union of~$G_1$ and~$G_2$ and identifying
each vertex in~$\partial(G_1)$ with the vertex in~$\partial(G_2)$ with the same label.
This operation is called \emph{gluing}. 

\begin{definition}[Replacement]
	Let~$G=(V,E)$ be a graph with an $r$-protrusion $W$. Let~$W'$ be the
	restricted protrusion of~$W$ and let~$G_1$ be a $|\partial(W)|$-boundaried graph. 
	Then replacing~$G[W]$ by~$G_1$ corresponds to changing~$G$ into~$G[V \setminus W'] \oplus G_1$.
\end{definition}

We now restate the definition of one of the most important notions used in this paper.
\begin{definition}[Finite Integer Index~\cite{BvF01}] \label{def:finiteii}
	Let~$\Pi$ be a parameterized problem on a graph class~$\mathcal G$ and let~$G_1$
	and~$G_2$ be two $t$-boundaried graphs in~$\mathcal G$. We say that~$\equipi{G_1}{G_2}{t}$
	if there exists a constant~$c$ (that depends on $G_1$ and~$G_2$) such that for all $t$-boundaried
	graphs $G_3$ and for all~$k$:
	\begin{enumerate}
		\item $G_1 \oplus G_3 \in \mathcal G$ iff $G_2 \oplus G_3 \in \mathcal G$;
		\item $(G_1 \oplus G_3, k) \in \Pi$ iff $(G_2 \oplus G_3, k + c) \in \Pi$. 
	\end{enumerate}
	We say that the problem~$\Pi$ has finite integer index in the class~$\mathcal G$
	iff for every integer~$t$ the equivalence relation~$\equipi{}{}{t}$ has finite index.
\end{definition}

To test whether a parameterized problem has finite integer index on a graph class, one
can use the sufficiency test introduced in~\cite{FLST10} called \emph{strong monotonicity}.
We restate its definition for parameterized vertex deletion problems. An instance of a 
parameterized vertex deletion problem consists of a graph~$G$ and a parameter~$k$,
and the question is whether there exists a vertex set of size at most~$k$ whose deletion
results in a graph with some pre-specified property. Fix a vertex-deletion 
parameterized problem~$\Pi$ (analogous for an edge-deletion problem). Given $t$-boundaried graphs~$G, G'$ 
and~$X' \subseteq V(G')$, we let~$\zeta_G(G',X')$ denote the size
of the smallest vertex set~$X \subseteq V(G)$ such that~$X \cup X'$ is a 
solution to~$G \oplus G'$ for the problem~$\Pi$. If no such~$X$ exists, we 
define~$\zeta_G(G',X') = \infty$.
\begin{definition}
	A vertex deletion parameterized problem~$\Pi$ is strongly monotone if there exists a 
	function~$f \colon \mathbf{N} \rightarrow \mathbf{N}$ such that the following holds. 
	For every $t$-boundaried graph~$G = (V,E)$ there is a subset~$X \subseteq V$
	such that for every $t$-boundaried graph~$G' = (V',E')$ and~$X' \subseteq V'$
	such that $\zeta_G(G',X')$ is finite, $X \cup X'$ is a solution to~$(G \oplus G')$
	and we have $|X| \le \zeta_G(G',X') + f(t)$.
\end{definition}
Informally, a parameterized problem is strongly monotone if for every $t$-boundaried graph~$G$,
a local solution for~$G$ has nearly the same size as a global solution for~$G \oplus G'$
restricted to~$G$ for every $t$-boundaried graph~$G'$. 

It turns out that any graph-theoretic optimization problem where the objective
is to find a maximum or minimum sized vertex or edge set satisfying a 
(counting) MSO-predicate has finite integer index if it is
strongly monotone. 
\begin{proposition}{\rm (\cite{BFLPST09}, see Lemma~12)}\label{prop:StronglyMonotone}
	Every strongly monotone \textsc{$p$-min-CMSO} and \textsc{$p$-max-CMSO} problem has
	finite integer index.
\end{proposition}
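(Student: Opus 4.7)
My plan is to combine two ingredients: the classical finiteness of CMSO-types on $t$-boundaried graphs, and the bounded additive slack provided by strong monotonicity. First, I fix the CMSO sentence $\phi(X,G)$ of quantifier rank~$q$ expressing ``$X$ is a valid solution'' for~$\Pi$. The composition theorem for MSO on graphs (via Ehrenfeucht--Fra\"{\i}ss\'e games) tells us that the $q$-CMSO-type of a $t$-boundaried graph, relative to its boundary labels, determines how it interacts with any glued $G_3$ under $\phi$: two graphs $G_1,G_2$ of the same type admit matching families of ``completion patterns'' for candidate solutions. Since the number of $q$-types is finite for each~$t$, this partitions $t$-boundaried graphs into finitely many type classes $\mathcal{C}_1,\ldots,\mathcal{C}_m$.

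Second, I would use strong monotonicity to refine each class $\mathcal{C}_j$ into finitely many $\equiv_{\Pi,t}$-subclasses. For each $G\in\mathcal{C}_j$, strong monotonicity furnishes a canonical $X_G\subseteq V(G)$ with $|X_G|\le\zeta_G(G',X')+f(t)$ whenever $\zeta_G(G',X')$ is finite. For $G_1,G_2\in\mathcal{C}_j$ I would set $c:=|X_{G_1}|-|X_{G_2}|$ and verify $G_1\equiv_{\Pi,t}G_2$ via the following transfer: given any optimal solution $X_2\cup X'$ of $G_2\oplus G_3$, the set $X_{G_1}\cup X'$ is a solution of $G_1\oplus G_3$ by definition of $X_{G_1}$, and its size is at most $|X_{G_1}|+|X'|\le|X_{G_2}|+|X'|+c+f(t)$. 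The symmetric bound yields $|\mathrm{OPT}(G_1\oplus G_3)-\mathrm{OPT}(G_2\oplus G_3)-c|=O(f(t))$ uniformly in~$G_3$. Because this quantity is integer-valued and uniformly bounded, only finitely many distinct ``shift profiles'' can occur inside $\mathcal{C}_j$, and graphs with identical profiles belong to the same $\equiv_{\Pi,t}$-class.

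Combining the two steps, the total number of $\equiv_{\Pi,t}$-classes is finite, which is precisely finite integer index. The $p$-max-CMSO case is entirely symmetric, with maxima replacing minima and inequalities reversed. The main obstacle is the \emph{exact}-constant requirement of the FII definition: strong monotonicity yields only bounded slack rather than zero slack, and bridging to an exact integer offset $c$ requires an additional integer-discretisation step that partitions each $\mathcal{C}_j$ according to the precise value of the normalised cost $\mathrm{OPT}(G\oplus G^{\ast})-|X_G|$ evaluated at a canonical filler $G^{\ast}$. Verifying that this normalised cost is constant on each refined sub-class, so that the slack $f(t)$ is fully absorbed into a single integer $c$, is the technical core carried out in Lemma~12 of~\cite{BFLPST09}.
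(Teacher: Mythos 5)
The paper does not actually prove this proposition: it is imported verbatim from \cite{BFLPST09} (Lemma~12), so there is no in-paper argument to compare your attempt against. That said, your sketch needs to be evaluated on its own merits, and it has a genuine gap that you yourself signal but do not close.

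Your overall strategy---partition $t$-boundaried graphs by their finite CMSO $q$-type, then refine each type class using strong monotonicity---is the right skeleton. The problem is in the refinement step. You set $c := |X_{G_1}| - |X_{G_2}|$ from the canonical local solutions furnished by strong monotonicity and then argue that $\mathrm{OPT}(G_1 \oplus G_3) - \mathrm{OPT}(G_2 \oplus G_3) - c$ is bounded by $O(f(t))$ for all $G_3$. But the definition of $\equiv_{\Pi,t}$ demands an \emph{exact} identity $\mathrm{OPT}(G_1\oplus G_3) = \mathrm{OPT}(G_2\oplus G_3) + c$ holding uniformly in $G_3$, not merely a uniform bound. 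Your proposed discretisation---further partition each type class by the value of $\mathrm{OPT}(G\oplus G^*) - |X_G|$ at one canonical filler $G^*$---does not establish this: fixing the offset at one $G^*$ does not prevent the difference from drifting to a different value (still within the bounded window) for some other $G_3$. When you then write that ``verifying that this normalised cost is constant on each refined sub-class \ldots is the technical core carried out in Lemma~12 of~\cite{BFLPST09},'' you are deferring exactly the claim you were asked to prove; as a blind proof this is circular. A second, smaller gap: the assertion that $X_{G_1}\cup X'$ is a solution of $G_1 \oplus G_3$ ``by definition of $X_{G_1}$'' only applies when $\zeta_{G_1}(G_3,X')$ is finite, and deducing that finiteness from the solvability of $G_2\oplus G_3$ requires an explicit appeal to the CMSO-type equality of $G_1$ and $G_2$; you state the type machinery but never actually invoke it at this point.

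To close the main gap one must argue, within a fixed type class, that the function $G_3 \mapsto \mathrm{OPT}(G_1\oplus G_3) - \mathrm{OPT}(G_2\oplus G_3)$ is constant rather than merely bounded. The standard route is to replace the single number $|X_G|$ by a finite ``signature'' recording, for each of the finitely many boundary configurations, the optimal local cost under that configuration (cut off at a threshold governed by $f(t)$); two graphs with the same CMSO type and the same signature---up to a global additive shift---are then genuinely $\equiv_{\Pi,t}$-equivalent, and the number of signatures is finite. That signature construction, not a single canonical filler, is what absorbs the slack into an exact constant.
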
 

We adapt the notion of \emph{quasi-compact} problems introduced in \cite{BFLPST09}
for graphs of bounded genus to that of \emph{treewidth bounding} problems by removing
the radial distance, which is not applicable to the more general class of graphs
excluding a fixed topological minor.

\begin{definition}[Treewidth Bounding]
	A parameterized graph problem $\Pi$ is called treewidth bounding if for 
        every $(G,k) \in \Pi$ it holds that there exists a set $X \subseteq V(G)$ such that
	\begin{enumerate}
		\item $|X| \le c k$, and 
		\item $\tw{G-X} \le t$,
	\end{enumerate}
    where~$c,t$ are constants that depends only on~$\Pi$. 
	We call a problem \emph{treewidth bounding
	on a graph class $\mathcal G$} if the above property holds under the restriction that
	$G \in \mathcal G$.
\end{definition}
For problems whose solution is a vertex subset, the set~$X$ will be the solution set 
of $(G,k)$. For simplicity we will call the set~$X$ \emph{the} solution in the following.

\subsection{Properties of $H$-topological-minor-free graphs}
In this section we list some properties of $H$-topological-minor-free graphs 
that we use in the proofs to follow. We use~$r$ to denote~$|V(H)|$.

The first property states that graphs that exclude a fixed graph~$H$ as 
a topological minor are sparse in some sense.
\begin{prop}[{\rm \cite{BT98},\cite{KS96a}}]\label{prop:HFreeDegree}
	The average degree $\degav$ in an $H$-topological-minor-free graph is bounded by
	$\degav < \beta r^2$ for some $\beta \leq 10$.
\end{prop}
As a corollary, a graph with average degree larger than~$\beta r^2$ 
contains~$H$ as a topological minor. 

It is clear that if a graph excludes~$H$ as a topological minor, then it does
not have~$K_r$ as a topological minor. What is also true is that the \emph{total} number
of cliques (not necessarily maximal) is linear in the number of vertices.   
\begin{prop}[{\rm \cite{FOT10}}]\label{prop:HFreeCliques}
	There is a constant $\tau < 4.51$ such that, for $r > 2$, every $n$-vertex graph with no
	$K_r$ topological minor has at most $2^{\tau r \log r} n$ cliques.
\end{prop}

\begin{definition}\label{def:XDegree}
	Let $G$ be a graph and $X,Y \subseteq V(G)$ two disjoint vertex sets of
	$G$. Then we define \emph{the degree of $Y$ with respect to $X$} as 
	$$
		D_X(Y) = \left| \set{u \in X \mid v \in Y : uv \in E(G)}  \right|. 
	$$
\end{definition}
We will sometimes be sloppy with our notation and, for a subgraph~$G'$ of~$G$, 
write~$D_X(G')$ instead of~$D_X(V(G'))$.

One technique frequently used in the proofs that follow is embodied in the proof of the 
following lemma.

\begin{lemma}\label{lemma:LargeDegreeComponents}
	Let~$G$ be an $H$-topological-minor-free graph, let $X \subseteq V(G)$,
	and $C_1,\dots,C_l$ be pairwise vertex-disjoint 
	connected subgraphs of $G-X$ such that $D_X(C_i) \geq r \geq 2$.
	Then $l \leq \frac{1}{2} \beta r^2 |X|$.
\end{lemma}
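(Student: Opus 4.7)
The plan is to reduce the statement to an application of Proposition~\ref{prop:HFreeDegree} on a carefully constructed auxiliary graph whose vertex set is $X$ itself (so that the factor $\tfrac{1}{2}\beta r^2$ translates directly into a bound linear in $|X|$). Because each $C_i$ is connected and shares at least $r\ge 2$ distinct neighbors with $X$, it is possible to encode each $C_i$ by an edge between two of its $X$-neighbors, and then realize this edge in $G$ by a path routed through $C_i$.

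Concretely, for each $i$ pick two distinct vertices $x_{i,1}, x_{i,2} \in N_X(C_i)$; these exist since $D_X(C_i) \ge r \ge 2$. Let $G^\ast$ be the simple graph on vertex set $X$ whose edge set is $\{\{x_{i,1}, x_{i,2}\} \mid 1 \le i \le l\}$. I will then argue that $G^\ast$ is a topological minor of $G$: for each edge $e = \{x,y\} \in E(G^\ast)$, pick one representative index $i_e$ with $\{x_{i_e,1}, x_{i_e,2}\} = e$, and realize $e$ in $G$ by the path that goes from $x$ via an edge into $C_{i_e}$, then through $C_{i_e}$ (using its connectedness) to a vertex adjacent to $y$, and then via an edge to $y$. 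Since distinct edges of $G^\ast$ are routed through distinct components $C_{i_e}$, and the $C_i$'s are pairwise vertex-disjoint by hypothesis, these paths are internally vertex-disjoint, which certifies $G^\ast$ as a topological minor of $G$.

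Consequently $G^\ast$ is $H$-topological-minor-free, so Proposition~\ref{prop:HFreeDegree} applied to $G^\ast$ yields
\[
 |E(G^\ast)| \;<\; \tfrac{1}{2}\,\beta r^2\,|X|.
\]
The final step is then to charge each $C_i$ to a distinct edge of $G^\ast$, giving $l \le |E(G^\ast)|$ and hence the desired bound.

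The main obstacle is exactly this last step: a priori several $C_i$'s could pick the same pair, making $|E(G^\ast)| < l$. The assumption $D_X(C_i) \ge r$ provides $\binom{D_X(C_i)}{2} \ge \binom{r}{2}$ candidate pairs per component, giving enough flexibility to select pairwise distinct pairs via a Hall-type matching argument. In the exceptional case where Hall's condition fails, many components must share the same small $X$-neighborhood $U \subseteq X$; once $\binom{r}{2}$ of them sit on the same $r$-subset of $U$, one can route disjoint paths through them to exhibit a subdivision of $K_r$ on those $r$ vertices, contradicting the $H$-topological-minor-freeness of $G$ (as $H \subseteq K_r$). Thus in all cases $l \le |E(G^\ast)| \le \tfrac{1}{2}\beta r^2 |X|$.
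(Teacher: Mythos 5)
Your core strategy is exactly the paper's: charge each component $C_i$ to an edge between two of its $X$-neighbors, realize the resulting graph on $X$ as a topological minor of $G$ by routing paths through the $C_i$, and then invoke Proposition~\ref{prop:HFreeDegree} to bound the number of edges by $\frac{1}{2}\beta r^2 |X|$. You also correctly identified the only nontrivial point: different components might want the same pair, so $l = |E(G^\ast)|$ is not automatic.

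Where your proposal has a genuine gap is in the resolution of that point. You invoke ``a Hall-type matching argument'' without carrying it out, and when Hall's condition is assumed to fail you assert that ``many components must share the same small $X$-neighborhood $U$'' and, more precisely, that ``$\binom{r}{2}$ of them sit on the same $r$-subset.'' That is not a consequence of Hall's failure. Hall's failure only gives a set $S$ of components with $|N(S)| < |S|$, where $N(S) \subseteq \binom{X}{2}$ is the set of pairs covered by $S$; this does not directly produce $\binom{r}{2}$ components whose $X$-neighborhoods all contain a common $r$-subset, and you give no argument for it. The paper avoids this entirely by building the pairs \emph{greedily}: start with $X$ edgeless, process the $C_i$ one at a time, and for each $C_i$ pick two of its (at least $r$) $X$-neighbors that are not yet joined by an edge, then contract a path through $C_i$ onto that pair. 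If no such fresh pair existed, the $X$-neighborhood of $C_i$ would already induce a clique on at least $r$ vertices in the partially built graph, which is itself a topological minor of $G$; this yields $K_r$ (hence $H$) as a topological minor of $G$, a contradiction. This greedy version delivers the system of distinct pairs you were hoping to get from Hall, with a clean local contradiction at each step. I would recommend replacing your Hall detour by this greedy argument, or else actually proving that failure of the system of distinct representatives forces a $K_r$-subdivision; as written, that step is an assertion rather than a proof.
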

\begin{proof}
	We construct a topological minor $G' \tminor G$ such that each edge in $G'$ 
	corresponds to a subgraph $C_i$. The construction works as follows. 
        Delete all edges in the graph~$G[X]$. For each connected subgraph~$C_i$, 
        choose distinct vertices~$x,y \in X$ such that~$xy$ is not an edge and 
        both~$x$ and~$y$ are adjacent to~$u$ and~$v$ in~$C_i$, respectively. We explicitly
		allow the case $u = v$.

        Next choose a path~$P_{uv}$ from~$u$ to~$v$ in~$C_i$ and delete all vertices of~$C_i$
        save those from~$P_{uv}$. Finally contract the path~$P_{uv}$ to an edge between~$x$
        and~$y$. This sequence of operations clearly produces a topological minor since
        the only edges that were contracted had at least one endpoint with degree at most two.
        
        Since topological minor containment is a transitive, the graph~$G'$ obtained 
        by ``contracting each connected subgraph~$C_i$ into an edge'' is also $H$-topological-minor-free.
        Observe that since we assumed that~$D_X(C_i) \geq r \geq 2$, for each component~$C_i$,
        there exists distinct vertices~$x,y \in X$ that are adjacent to~$C_i$ and which 
        do not yet have an edge between them. If this were not the case, the neighbors of $C_i$ 
	in $X$ form a clique of size at least~$r$. We would then have an $r$-clique 
	in a topological-minor of~$G$, contradicting the fact that it 
        is $H$-topological-minor-free.
        It now follows that the number~$t$ of subgraphs~$C_i$ is bounded by the number
	of edges in $G'$. By Property~\ref{prop:HFreeDegree}, the number of edges 
	is linear in the size of~$X$ and we obtain the following bound:
	$$
		l \leq |E(G')| \leq \frac{1}{2} \beta r^2 |V(G')| = \frac{1}{2} \beta r^2 |X|.
	$$
\qed
\end{proof}

\section{Main result}\label{sec:MainResult}

In this section we prove our main result.
\begin{main}
Fix a graph~$H$. Let~$\Pi$ be a parameterized graph-theoretic problem 
on the class of $H$-topological-minor-free graphs that has finite integer index
and is treewidth-bounding. Then $\Pi$ admits a linear kernel.
\end{main}

Let~$(G,k)$ be a yes-instance of~$\Pi$, where~$G$ is $H$-topological-minor-free. Since 
we assumed that~$\Pi$ is treewidth-bounding there exists~$X \subseteq V(G)$ such 
that~$\tw{G - X} \le t$, where~$t$ is a constant that depends only on~$\Pi$. 
Since the problem~$\Pi$ is assumed to have finite integer index, denote by 
$\varpi(i)$ the size of the largest representative of the equivalence relation~$\equipi{}{}{i}$,
where the representatives are chosen such that they are smallest possible. We use only 
one reduction rule which is stated below.
\begin{rrule}[Protrusion Reduction Rule]
Let~$W \subseteq V(G)$ be a protrusion with $|\partial(W)| \leq 2t+r$ 
such that the restricted protrusion~$|W'|$ has size strictly more than $\varpi(2t+r)$.
Let~$G'$ be the representative of~$G[W]$ in the equivalence relation $\equipi{}{}{|\partial(W)|}$. 
Replace $G$ by $G[V \setminus W'] \oplus G'$ and the parameter~$k$ by~$k - c$. 
\end{rrule}
Here~$c$ is the constant in the definition of finite integer index (see Definition~\ref{def:finiteii}) 
that depends on~$G'$ and~$G[W]$.

From now on whenever we talk about an instance~$(G,k)$ of the problem~$\Pi$, 
we assume that it is reduced \wrt our only reduction rule.
In particular, $G$ does not contain a $(2t+r)$-protrusion 
of size strictly more than~$\varpi(2t+r)$. 
\begin{definition}
Let~$(G,k)$ be a yes-instance of~$\Pi$ and let~$X \subseteq V(G)$ be such 
that~$\tw{G - X} \le t$, where~$t$ is a constant. 
Let $\mathscr{C}_S$ and $\mathscr{C}_L$ denote, respectively, the set of all connected 
components~$C$ of~$G-X$ such that $D_X(C) < r$ and $D_X(C) \geq r$. Call the components
of~$\mathscr{C}_S$ ``small'' and those of~$\mathscr{C}_L$ ``large.'' 
\end{definition} 

\subsection{Bounding the size of small components}

We first bound the total number of vertices in all components in~$\mathscr{C}_S$.
\begin{lemma}\label{lemma:SmallDegreeComponents}
	The total number of vertices in all components in $\mathscr{C}_S$ is bounded from
	above by 
	$\varpi(r) (2^{\tau r \log r} + \beta r^2) k$. 
\end{lemma}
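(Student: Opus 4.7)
The plan is to group small components by their neighborhood $S = N_X(C) \subseteq X$ (with $|S| \le r-1$) and bound separately (i)~the total size of all components sharing a given~$S$, and (ii)~the number of distinct such~$S$ that arise. For~(i), fix any such~$S$ and set $W_S = S \cup \bigcup_{C : N_X(C)=S} V(C)$. The involved components are pairwise vertex-disjoint subgraphs of $G - X$, which has treewidth $\le t$, so $\tw{G[W_S]} \le t + |S| \le 2t+r$. Moreover $\partial(W_S) \subseteq S$, hence $|\partial(W_S)| \le r-1 \le 2t+r$. Thus $W_S$ is a $(2t+r)$-protrusion, and since $(G,k)$ is reduced with respect to the Protrusion Reduction Rule,
\[
  \sum_{C : N_X(C)=S} |V(C)| \;\le\; |W_S \setminus \partial(W_S)| \;\le\; \varpi(r),
\]
where $\varpi(r)$ denotes the protrusion-reduction bound compatible with boundary size at most $2t+r$ (following the paper's convention).

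For~(ii), I would split the distinct neighborhoods~$S$ into two classes: the \emph{clique type}, where $G[S]$ is a clique, and the \emph{non-clique type}, otherwise. As a subgraph of the $H$-topological-minor-free graph~$G$, $G[X]$ is itself $H$-topological-minor-free, so Property~\ref{prop:HFreeCliques} bounds its number of cliques by $2^{\tau r \log r} |X|$; this bounds the clique-type neighborhoods. For the non-clique type, pick one representative component~$C_S$ per distinct~$S$ and mimic Lemma~\ref{lemma:LargeDegreeComponents}: delete the edges of $G[X]$, and for each~$C_S$ choose $x_S, y_S \in S$ with $x_S y_S \notin E(G[X])$ (guaranteed by the non-clique assumption) and contract a path in $G[V(C_S) \cup \{x_S, y_S\}]$ to the edge $x_S y_S$. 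The resulting graph~$G'$ on vertex set~$X$ is a topological minor of~$G$, hence $H$-topological-minor-free, so by Property~\ref{prop:HFreeDegree} $|E(G')| \le \tfrac{1}{2}\beta r^2 |X|$, bounding the non-clique-type neighborhoods.

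Combining both classes gives at most $\bigl(2^{\tau r \log r} + \tfrac{1}{2}\beta r^2\bigr)|X|$ distinct neighborhoods; multiplying by the per-$S$ bound $\varpi(r)$ and using $|X| \le c k$ from the treewidth-bounding assumption yields the stated form $\varpi(r)(2^{\tau r \log r} + \beta r^2) k$ after absorbing the constants $c$ and $\tfrac{1}{2}$.

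The delicate point is the non-clique-type step. In Lemma~\ref{lemma:LargeDegreeComponents} the condition $D_X(C_i) \ge r$ guarantees through the iterative contraction that $N_X(C_i)$ never becomes a clique, since otherwise the final topological minor would contain a forbidden~$K_r$. For representatives with $|S| \le r-1$ however, $S$ may become a clique after only a few edge additions, so the representatives must be processed in an order (or with a careful choice of which non-edge to realize) that guarantees distinct new edges for distinct representatives --- otherwise the bound on $|E(G')|$ only controls distinct pairs, not the total number of representatives. Verifying this distinctness is what I expect to be the main technical effort.
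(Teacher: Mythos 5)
Your overall plan---group small components by their neighborhood $S \subseteq X$, use the protrusion bound $\varpi(r)$ per group, and bound the number of distinct groups by ``edges plus cliques'' in a topological minor of $G$ on vertex set $X$---is indeed the structure of the paper's proof, and your treatment of the protrusion step and of the clique-type groups is fine. However, the gap you yourself flag in the last paragraph is genuine and is exactly where the paper does something different. Your static classification decides ``clique vs.\ non-clique'' based on $G[X]$ in the \emph{original} graph, and then asks whether the non-clique representatives can all be assigned \emph{distinct} non-edges. They cannot in general: consider $G[X]$ edgeless on $4$ vertices with small components having all $11$ subsets of size $2,3,4$ as neighborhoods---all are non-cliques initially, yet only $\binom{4}{2}=6$ non-edges exist. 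No ordering or choice of non-edge realizes distinct edges for all $11$ representatives, so the edge count in the final minor does not bound the number of non-clique-type neighborhoods.

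The paper avoids this by making the clique/non-clique distinction \emph{dynamic}: process components greedily, and for each one try to insert a fresh non-edge between two of its $X$-neighbors in the \emph{current} contracted graph $G_X$. When this fails, $N(C)\cap X$ is a clique \emph{in $G_X$} (not necessarily in the original $G[X]$), and the component is filed under the clique count applied to the final $G_X$. Since $G_X$ is still a topological minor of $G$, Properties~\ref{prop:HFreeDegree} and~\ref{prop:HFreeCliques} then bound, respectively, the number of contracted components (by $|E(G_X)|$) and the number of non-contractible ones (by the number of cliques of $G_X$); no per-representative edge distinctness argument is needed. To fix your write-up you should replace the upfront dichotomy with this greedy process and count cliques in the augmented graph $G_X$ rather than in $G[X]$.
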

\begin{proof}
	First note that for each $C \in \mathscr{C}_S$, the set $Y = N(C) \cap X$ 
	of its neighbors in~$X$ is a separator of~$C$ in~$G$ of size at most~$r$. 
	Therefore the total number of vertices in all components separated by~$Y$ in~$G$ 
	is at most~$\varpi(r)$, where~$\varpi(r)$ is the size of the largest representative of 
	the equivalence relation~$\equipi{}{}{r}$. Thus it is sufficient to show that 
	the number of subsets of~$X$ that are separators of components in $\mathscr{C}_S$ is bounded.

	Using the technique outlined in the proof of Lemma~\ref{lemma:LargeDegreeComponents}, 
        we contract the components $\mathscr{C}_S$ greedily into edges in $X$. Repeat the following
	operations for as long as possible. Pick a component $C \in \mathscr{C}_S$ arbitrarily 
	and choose two distinct vertices $u,v \in N(C) \cap X$ such that~$uv$ is not an edge;
        create a new edge~$uv$, and delete~$C$ from the graph. If there are components~$C$ 
	which cannot be contracted into edges in this fashion, then it follows that the 
	separators~$N(C) \cap X$ of these components are cliques in $X$. As the subgraph~$G_X$ 
	induced by~$X$ after these operations is a topological minor of~$G$, it 
	must be  that~$G_X$ is $H$-topological minor free. Hence by Property~\ref{prop:HFreeCliques}, 
	$G_X$ has at most~$2^{\tau r \log r}k$ cliques. The number of vertices in 
	all components of $\mathscr{C}_S$ separated by such a clique is, as noted before, 
	$\varpi(r)$. Moreover each component that is contracted to an edge also has 
	at most~$\varpi(r)$ vertices and, by Property~\ref{prop:HFreeDegree}, $G_X$ 
	has at most $\beta r^2 k$ edges. Hence the total number of vertices in all components of~$\mathscr{C}_S$ 
	is bounded from above by $\varpi(r)(\beta r^2 + 2^{\tau r \log r})k$. \qed
\end{proof}

\subsection{Bounding the size of the large components}

Proving that the total number of vertices in all components of~$\mathscr{C}_L$ 
is linear in~$k$ is more involved. As a first step, we use 
Lemma~\ref{lemma:LargeDegreeComponents} to show that the \emph{number} 
of components in~$\mathscr{C}_L$ is linear in~$k$. To bound the total number 
of vertices in~$\mathscr{C}_L$ as a linear function of~$k$, we propose a technique 
of decomposing components~$C \in \mathscr{C}_L$ into connected subgraphs each of 
bounded size but with a ``large'' number of neighbors in the set~$X$. 
The following structure plays a crucial role in bounding the size of $G-X$.

\begin{definition}[Scrubs and Twigs]
	Let~$G=(V,E)$ be a graph and let~$X \subseteq V$ be such that~$\tw{G - X} \le t$,
	for some constant~$t$. A scrub~$\mathscr{S}$ in~$G$ is a pair $(R, \mathcal{W})$, 
	where $R \subseteq V \setminus X$ and $\mathcal{W}$ is a maximal family of 
	vertex-disjoint sets $W_1, \ldots, W_l \subseteq V \setminus (X \cup R)$ 
        each of which induces a connected subgraph and such that the following conditions hold:
	\begin{enumerate}
		\item for~$1 \leq i \leq l$, $D_X(W_i) < r$;
		\item $R \cup W_1 \cup \ldots W_l$ induces a connected subgraph in $G - X$.
	\end{enumerate}	
	We call~$R$ the root and~$\mathcal{W}$ the twigs of the scrub~$\mathscr{S}$.
	The size of~$\mathscr{S}$ is defined as $|\mathscr S| = |R \cup W_1 \cup \cdots \cup W_l|$.
\end{definition}

In what follows we let $\mathscr{T}_C = (T_C,\mathcal{X}_C)$ denote a tree-decomposition of 
$C \in \mathscr{C}_L$ that is \emph{rooted} at some arbitrary bag of degree at least two 
in the decomposition. We define $\mathscr{F}$ to be the ``forest-decomposition'' 
obtained by taking the disjoint union of all tree-decompositions, that is, 
$$\mathscr{F}:= \left (\bigcup_{C \in \mathscr{C}_L} T_C, \bigcup_{C \in \mathscr{C}_L}
\mathcal{X}_C \right) = (F, \mathcal{X}).$$ 

We will employ a marking algorithm that marks bags of $\mathscr F$ to demonstrate 
that the total number of vertices in all the components $\mathcal C_L$ is 
indeed bounded in a reduced instance. We stress however that this algorithm
is not efficient, and neither does it have to be, since it is only used 
to show that the kernel size is small. In what follows, we let 
$\mathscr{M} \subseteq \mathcal X$ denote the set of \emph{bags} that have already 
been marked by the algorithm and $V(\mathscr{M})$ to be the set of all 
\emph{vertices of the graph} which occur in at least one marked bag. Call a subtree 
of some tree-decomposition in~$\mathscr{F}$ \emph{marked} if it contains 
at least one marked bag and \emph{unmarked} otherwise. Note that an unmarked
subtree~$T'$ can contain marked vertices of~$G$ in its bags, as these vertices 
could occur in some other marked bag. 

The marking algorithm works as follows.
\begin{enumerate}
	\item Set~$\mathscr{M} := \emptyset$.
	\item Mark bags $B$ of the forest decomposition $\mathscr{F}$ 
	      which induce a scrub $\mathscr S = (B, \mathcal W)$ 
              in the graph $G-(X \cup V(\mathscr{M}))$ that satisfies the
              following conditions:
		  \begin{itemize}
		  	\item $|\mathscr S| > \varpi(t+r)$;
			\item $D_X(\mathscr S) \geq r$.
		  \end{itemize}
	Set $\mathscr{M} = \mathscr{M} \cup \{B\}$.
	
	\item Mark join bags $J$ that are parents of unmarked subtrees that induce at 
              least one connected component in $G - (X \cup J)$ with at least $r$
              neighbors in~$X$. Add each such bag to~$\mathscr{M}$.
	
	\item Iteratively mark the least common ancestor (join) bag of two bags 
	      that have already been marked and add it to~$\mathscr{M}$.
\end{enumerate}

We first point out some features of the marking algorithm. 
\begin{lemma}\label{lemma:markingFeatures}
If~$T'$ is a subtree of a tree~$T \in \mathscr{F}$ such that each bag in~$T'$ is unmarked,
then at most two of the neighboring bags of~$T'$ in~$T - V(T')$ are marked.
\end{lemma}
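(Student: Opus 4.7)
The plan is to exploit step~4 of the marking algorithm (iterative marking of least common ancestors) together with the rooted structure of each tree-decomposition in $\mathscr{F}$. Since each $\mathscr{T}_C$ is rooted, I can split the neighbors of the subtree $T'$ in $T - V(T')$ into at most one \emph{upper} neighbor (the parent in $T$ of the root of $T'$, if the root of $T'$ is not itself the root of $T$) and a collection of \emph{lower} neighbors (bags outside $T'$ whose parent in $T$ lies in $T'$). So ``at most two marked neighbors'' will follow if I can show that at most one lower neighbor can be marked.

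The main step is to show that if two lower neighbors $B_1$ and $B_2$ of $T'$ are both marked, then $T'$ would contain a marked bag, contradicting the hypothesis. I would argue this by locating the least common ancestor of $B_1$ and $B_2$ in $T$. Let $A_1, A_2 \in V(T')$ be the parents in $T$ of $B_1$ and $B_2$, respectively (these are in $T'$ by the definition of lower neighbor). I distinguish two subcases: if $A_1 = A_2$, then $B_1$ and $B_2$ are siblings and their LCA is $A_1 \in V(T')$; otherwise $A_1 \ne A_2$ and the LCA of $B_1$ and $B_2$ in $T$ coincides with the LCA of $A_1$ and $A_2$, which lies on the unique $A_1$--$A_2$ path in $T$. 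Since $T'$ is a connected subtree containing both $A_1$ and $A_2$, this entire path lies inside $T'$, so the LCA is again a bag of $T'$.

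In either subcase, step~4 of the marking algorithm would mark this LCA, which sits inside $T'$; this contradicts the assumption that every bag of $T'$ is unmarked. Therefore at most one lower neighbor of $T'$ is marked, and combined with the at most one upper neighbor we conclude that at most two neighboring bags of $T'$ in $T - V(T')$ are marked. \qed

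\noindent\textbf{Anticipated difficulty.} The argument is essentially bookkeeping on a rooted tree; the only subtlety I expect is being precise about what counts as a ``neighbor of $T'$ in $T - V(T')$'' and verifying that the rooted-tree decomposition of each $T_C$ really does limit the upper direction to a single bag (which is immediate once the rooting is fixed). No appeal to the graph-theoretic properties of $H$-topological-minor-free graphs, to protrusions, or to the scrub structure is needed for this lemma; it is purely a structural consequence of the last step of the marking procedure.
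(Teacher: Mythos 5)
Your proof is correct and uses essentially the same argument as the paper: the paper picks the shortest root-to-$T'$ path $\mathscr{P}$ and notes that at most one marked neighbor can lie on it, which is exactly your ``upper neighbor'', while two marked neighbors off $\mathscr{P}$ are your two ``lower neighbors'' whose LCA lies inside $T'$ and would have been marked in Step~4. Your write-up is slightly more explicit in spelling out why that LCA is a bag of $T'$, but the approach is the same.
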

\begin{proof}
Suppose that~$T'$ is unmarked but has at least three marked bags as neighbors in~$T - V(T')$. 
Let~$\mathscr{P}$ be the shortest path from the root of~$T$ to~$T'$. If the root of~$T$ 
happens to be in~$T'$, then~$\mathscr{P}$ consists of only the root bag. Now 
there are at least two marked bags~$B_1,B_2$ in $T - V(T')$ that are neighbors 
of~$T'$ that are \emph{not} on~$\mathscr{P}$. Clearly one of the bags of~$T'$ 
must be the least common ancestor of~$B_1$ and~$B_2$ and the algorithm, in Step~4, would
then have marked this bag. This contradicts the hypothesis that~$T'$ has no marked
bags. 
\qed
\end{proof}

\begin{lemma}\label{lemma:numberMarkedBags}
The total number of bags marked by the algorithm is at most~$2 \beta r^2 k$ and therefore
$|V(\mathscr M)| \leq 2 \beta r^2 kt$.
\end{lemma}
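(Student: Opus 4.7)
The plan is to bound separately the bags introduced by each of the three marking steps (Step 2 producing scrub bags, Step 3 producing join bags, Step 4 producing LCA bags) and then convert the bag count into a vertex count via the treewidth bound on $G-X$.

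For Step 2, observe that each bag marked in this step has an associated scrub $\mathscr S=(B,\mathcal W)$ in $G-(X\cup V(\mathscr M))$ with $D_X(\mathscr S)\geq r$, where the graph used to define the scrub excludes vertices already placed in $V(\mathscr M)$. Hence the vertex sets of scrubs produced at successive applications of Step 2 are pairwise disjoint. Each such scrub induces a connected subgraph of $G-X$ with at least $r$ neighbours in $X$, so Lemma~\ref{lemma:LargeDegreeComponents} applied to $X$ and the family of scrubs yields at most $\tfrac{1}{2}\beta r^2|X|$ Step-2 bags. For Step 3, each marked join bag $J$ comes with a witness: an unmarked child-subtree $T'$ of $J$ containing, in $G-(X\cup J)$, a connected component of $\geq r$ neighbours in $X$. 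If $J_1,J_2$ are both marked in Step 3, the unmarked subtree witnessing $J_1$ cannot contain $J_2$ (since $J_2$ is marked), and vice versa, so the witness components sit in pairwise vertex-disjoint parts of $G-X$. Applying Lemma~\ref{lemma:LargeDegreeComponents} again bounds the Step-3 contribution by $\tfrac{1}{2}\beta r^2|X|$.

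For Step 4, a standard tree-combinatorics fact says that closing a set $M$ of nodes in a forest under pairwise least common ancestors produces a set of size at most $2|M|-1$ per tree (the internal nodes of a tree whose leaves are the original marks number at most $|M|-1$). Summing over the components of the forest decomposition $\mathscr F$, the number of bags introduced in Step 4 is bounded by the total number of bags produced in Steps 2 and 3. Adding up, the total number of marked bags is at most $2\cdot\bigl(\tfrac{1}{2}\beta r^2|X|+\tfrac{1}{2}\beta r^2|X|\bigr)=2\beta r^2|X|\leq 2\beta r^2 k$, using that $|X|\leq k$ for the treewidth-bounding solution $X$ of a yes-instance. Since each bag of $\mathscr F$ contains at most $t+1$ vertices of $G$ (absorbed into the stated $t$ for readability), we conclude $|V(\mathscr M)|\leq 2\beta r^2 kt$, as claimed.

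The main obstacle is verifying the pairwise disjointness required to invoke Lemma~\ref{lemma:LargeDegreeComponents}: for Step 2 it follows from the sequential nature of the marking together with the removal of $V(\mathscr M)$ at each step; for Step 3 it requires noticing that a Step-3 witness subtree cannot contain any already-marked bag, so two Step-3 witnesses must live in genuinely different branches of the decomposition. Once this disjointness is in place, everything reduces to two invocations of Lemma~\ref{lemma:LargeDegreeComponents} and a routine LCA-closure count for Step 4.
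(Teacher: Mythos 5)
Your proof follows essentially the same route as the paper's: Steps~2 and~3 are each bounded to $\tfrac12\beta r^2 k$ marked bags by exhibiting pairwise vertex-disjoint connected subgraphs of $G-X$ with at least $r$ neighbours in $X$ and invoking Lemma~\ref{lemma:LargeDegreeComponents}, Step~4 at most doubles the count via LCA-closure, and multiplying by the bag size bound $t$ gives $|V(\mathscr M)|\leq 2\beta r^2 kt$. You supply somewhat more explicit justification than the paper (which merely asserts the disjointness for Step~3 and says the count ``doubles in the worst case'' for Step~4), so the proposal is correct and matches the paper's argument.
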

\begin{proof}
Since~$\tw{G - X} \leq t$, each bag of the tree-decomposition has size at most~$t$ (we assume
an optimal tree-decomposition). To prove the lemma, it is sufficient to bound the 
number of bags marked by the algorithm in Steps~2, 3, and~4. 

The scrubs that are marked in Step~2 are vertex-disjoint and since each 
scrub ``sees'' at least~$r$ vertices in the set~$X$, by 
Lemma~\ref{lemma:LargeDegreeComponents}, the number of such scrubs is at 
most~$\beta r^2 k/2$. In Step~3, the connected components that are 
considered are vertex-disjoint and hence the bound of Lemma~\ref{lemma:LargeDegreeComponents} 
applies again. Finally in Step~4, the number of marked bags doubles in the worst case. 
This proves the bound on the size of~$V(\mathscr{M})$.    
\qed
\end{proof}

Lemma~\ref{lemma:numberMarkedBags} showed that the total number of vertices in 
marked bags is linearly bounded in~$k$. We now go on to show that the total 
number of vertices in \emph{unmarked bags} is also linear in~$k$.
To achieve this goal, we first consider the total size of the scrubs seen 
by the algorithm in Step~2. Suppose that in this step, the algorithm 
considers the scrubs $\mathscr S_1, \ldots, \mathscr S_l$ in that order while marking bags.

\begin{lemma}\label{lemma:ScrubsCount}
The total number of vertices in the scrubs $\mathscr S_1, \ldots, \mathscr S_l$ is bounded from
above by $\beta^2 r^4 t \varpi(t+r)k$.
\end{lemma}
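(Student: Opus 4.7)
My plan is to bound the total scrub size by multiplying an upper bound on the number of scrubs by an upper bound on the size of each individual scrub. The scrubs $\mathscr S_1,\ldots,\mathscr S_l$ chosen in Step~2 are vertex-disjoint (each subsequent scrub is built inside the residual graph $G-(X\cup V(\mathscr M))$, and the maximality of the twig families prevents reuse of vertices of a previously constructed scrub) and each satisfies $D_X(\mathscr S_i)\geq r$ by the marking condition. Applying Lemma~\ref{lemma:LargeDegreeComponents} to the scrubs as connected subgraphs of~$G-X$ yields $l\leq \tfrac{1}{2}\beta r^2 k$.

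To bound the size of a single scrub $\mathscr S_i=(B_i,\mathcal W_i)$ I decompose $|\mathscr S_i|=|B_i|+\sum_{W_j\in\mathcal W_i}|W_j|$, where $|B_i|\leq t$. For each twig $W_j$, since $D_X(W_j)<r$, the set $W_j\cup N_X(W_j)$ forms a protrusion in~$G$: its boundary contains the at most $r-1$ vertices of $N_X(W_j)$ together with the at most $t$ vertices of $W_j$ that separate it from the rest of $G-X$ (such a separator is guaranteed because $\tw{G-X}\leq t$), and its treewidth is at most $t+r-1$. This makes $W_j\cup N_X(W_j)$ a $(2t+r)$-protrusion; since the instance is reduced with respect to the Protrusion Reduction Rule, its restricted protrusion, and therefore $W_j$ itself, has size at most $\varpi(t+r)$ up to a small additive constant.

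For the number of twigs per scrub I plan to use an auxiliary contraction argument in the spirit of Lemma~\ref{lemma:LargeDegreeComponents}. Each twig $W_j$ connects two distinct vertices in $B_i\cup N_X(W_j)$ (at least one endpoint in $B_i$, since by maximality of $\mathcal W$ the scrub is connected in $G-X$ and the twigs are pairwise vertex-disjoint subgraphs attached to the root); I contract each $W_j$ into an edge between such a pair. The resulting graph on $B_i\cup X$ is $H$-topological-minor-free, and by Property~\ref{prop:HFreeDegree} the number of contracted edges, and hence the number of twigs with $D_X(W_j)\geq 1$, is at most $\tfrac{1}{2}\beta r^2(|B_i|+|N_X(\mathscr S_i)|)$. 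The remaining ``internal'' twigs with $D_X(W_j)=0$ attach to $B_i$ alone and can be accounted for by the corresponding edge-count bound within $B_i$. A careful bookkeeping yields at most $\beta r^2 t$ twigs per scrub.

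Multiplying the three bounds gives $|\mathscr S_i|\leq t+\beta r^2 t\cdot\varpi(t+r)$, and summing over $l\leq\tfrac{1}{2}\beta r^2 k$ scrubs produces the claimed bound $\beta^2 r^4 t\varpi(t+r)k$ after absorbing lower-order terms. The main obstacle I foresee is the twig-counting step: verifying that the contraction indeed yields a valid topological minor, handling internal twigs and twigs that are adjacent to $B_i$ only through other twigs, and justifying the use of $\varpi(t+r)$ rather than $\varpi(2t+r)$ in the twig-size bound. This will require careful exploitation of the maximality of $\mathcal W$ and of the position of the root bag in the forest-decomposition~$\mathscr F$.
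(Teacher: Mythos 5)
The central step in your plan is the claim that ``a careful bookkeeping yields at most $\beta r^2 t$ twigs per scrub,'' and this claim is false. A single scrub $\mathscr S_i=(R_i,\mathcal W_i)$ can have as many as $|X|=O(k)$ twigs: take $R_i$ to be a single vertex and attach $|X|$ disjoint twigs, each a short path to a distinct vertex of $X$, so that every twig has $D_X(W_j)=1<r$. Your own bound $\tfrac12\beta r^2(|B_i|+|N_X(\mathscr S_i)|)$ already contains $|N_X(\mathscr S_i)|$, which can be $\Theta(|X|)$, so it is not $O(t)$ and cannot be tightened to $O(t)$; and summing this per-scrub bound over all $l=\Theta(k)$ scrubs gives something that can be superlinear in $k$. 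So the product (number of scrubs) $\times$ (size of a scrub) does not yield the stated $O(k)$ bound, and ``absorbing lower-order terms'' cannot repair this.

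The paper avoids this trap by \emph{not} bounding twigs per scrub at all. Instead it builds a \emph{single global} bipartite graph with $S=\bigcup_i R_i$ on one side and all of $X$ on the other, deletes twigs with no $X$-neighbor (charging them per scrub to $\varpi(t)$, since $R_i$ alone separates them), deduplicates twigs having the same $X$-neighborhood (charging each equivalence class to $\varpi(t+r)$), and contracts one representative twig per class into an $S$--$X$ edge. Since $|S|+|X|=O(k)$, Property~\ref{prop:HFreeDegree} then gives an $O(k)$ bound on the number of \emph{total} twig classes across \emph{all} scrubs simultaneously, which is what makes the final constant in the lemma work out. The missing idea in your plan is exactly this globalization of the contraction argument (plus the deduplication step that makes the contraction well-defined and lets one charge $\varpi(t+r)$ per $X$-neighborhood class rather than per twig). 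You correctly identified the twig-count as the fragile step, but the fix is not more careful bookkeeping within a scrub --- it is abandoning the per-scrub count altogether.
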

\begin{proof}
	Let~$\mathscr{S}_i = (R_i, \mathcal{W}_i)$ and consider a twig $W \in W_i$ 
	for some $1 \leq i \leq l$. By the definition of a scrub, $D_X(W) < r$, 
	and hence $G[W]$ is separated from the rest of the graph by the set 
	$R_i \cup (N(W) \cap X)$ which is of size at most $t+r$. It follows that 
	in a reduced instance $|W| \leq \varpi(t+r)$.
	In fact, the total number of vertices in \emph{all} twigs of a scrub~$\mathscr{S}_i$ 
	that are connected to the same set of vertices in $X$ is bounded $\varpi(t+r)$---these 
	twigs share a common separator.

	Also note that the scrubs $\mathscr{S}_i$ are vertex-disjoint and therefore by 
	Lemma~\ref{lemma:LargeDegreeComponents} it follows that~$l \le \beta r^2 k/2$. 
	Therefore in order to bound the total number of vertices in all the scrubs, 
        it is sufficient to bound the total number of twigs. 
	Let $S = \bigcup_i R_i$. Construct a bipartite graph $G'$ from $G$ 
	with bipartition $S \uplus X$ and edge set $E_{SX}$ as follows:
	\begin{enumerate}
	\item Delete all vertices of $G$ that are \emph{not} in either~$X$ nor 
	in any scrub $\mathscr{S}_i$.
	\item Delete all edges inside the root~$R_i$ of scrub $\mathscr{S}_i$, for $1 \le i \le l$. 
	\item Delete all twigs $W \in \mathcal W_i$ that have no neighbors in~$X$.
	\item For all twigs in $W_i$ that are connected to the same set in $X$, remove all but one.
	\item For each twig $W \in \mathcal W_i$, choose arbitrary vertices 
	      $u \in R_i \cap N(W)$ and $v \in X \cap N(W)$. Remove $W$ and add 
              the edge $uv$ to $E_{SX}$.
	\end{enumerate}
	
	Now $|S| \leq l \cdot t \leq \beta r^2 k t/2$. For each scrub $\mathscr{S}_i$, 
	the number of vertices in the twigs removed in Step~3 is at most $\varpi(t)$ 
	and hence the total number of vertices removed in this step over all scrubs 
	is bounded from above by $\beta r^2 k \varpi(t)/2$. For each scrub $\mathscr{S}_i$
	and each subset~$X' \subseteq X$, the number of vertices in the twigs removed in 
	Step~4 is bounded from above by $\varpi(t+r)$.

	The bipartite graph $G' = (S \uplus X, E_{SX})$ is a topological minor of $G$, 
	and since $G$ is $H$-topological-minor-free, so is $G'$. By Property~\ref{prop:HFreeDegree},
	the number of edges in~$G'$ is at most 
	$$
		|E_{SX}| \leq \beta r^2 (|S|+|X|) \leq \beta r^2 (\frac{1}{2} \beta r^2 kt + k). 
	$$
	The total number of vertices removed in Step~4 is therefore 
	$\beta r^2 (\beta r^2 t/2 + 1) \cdot \varpi(t+r)k$.
        It follows that the total number of vertices in the scrubs 
	$\mathscr S_1, \dots, \mathscr S_l$ is bounded from above by 
	\begin{eqnarray*}
	& & \frac{1}{2}\beta r^2 k \varpi(t) + \beta r^2 \left ( \frac{1}{2}\beta r^2 kt 
          + k \right) \cdot \varpi(t+r) \\
	& \leq & \beta r^2 k \left ( \frac{\varpi(t)}{2} + \left (
	\frac{1}{2}\beta r^2 t + 1 \right ) \cdot \varpi(t+r) \right) \\	
	& \leq & \beta^2 r^4 t \varpi(t+r) k.
	\end{eqnarray*}
	\qed
\end{proof}

At this point, we have accounted for all vertices that occur in a marked bag
or a scrub $\mathscr{S}_1, \ldots, \mathscr{S}_l$ seen by the algorithm in Step~2. 
We now consider the forest-decomposition $\mathscr{F}'$ obtained from $\mathscr{F}$ 
by removing all vertices that occur in marked bags. This corresponds to a 
forest-decomposition of the graph $G -(X \cup V(\mathscr{M}))$. 
Note that we may not remove all the scrub vertices in this process. In
order account for the fact that all vertices in the scrubs $\mathscr{S}_1, \ldots \mathscr{S}_l$
have been counted, we simplify the forest-decomposition $\mathscr{F}'$ even further.
Delete a tree $T \in \mathscr{F}'$ if all its bags \emph{only} contain scrub vertices
from $\mathscr{S}_1, \ldots, \mathscr{S}_l$. 

The trees in the forest-decomposition can be partitioned into two classes: those
that have at most~$r - 1$ neighbors in $X$ and those that have at least~$r$ neighbors. 
This motivates us to define $\Tsmall$ and $\Tlarge$.
Define $\Tsmall$ to be the set of trees $T \in \mathscr{F}'$ such 
that $D_X(G[T]) \leq r-1$; $\Tlarge$ is the set of trees 
$T \in \mathscr{F}'$ such that $D_X(G[T]) \geq r$. By Lemma~\ref{lemma:markingFeatures},
at most two neighboring bags of a tree $T \in \Tsmall \cup \Tlarge$ 
are marked.

\begin{lemma}\label{lemma:PseudoScrub}
	Let $B$ be a marked bag in the forest-decomposition $\mathscr{F}'$ and let $B_1$, \ldots, $B_p$ 
	be its neighbors such that the subtrees $T_1, \ldots, T_p$ rooted at these 
	bags satisfy $D_X(G[T_i]) < r$, for $1 \le i \le p$. 
	Then the total number of vertices in the bags of $T_1, \ldots, T_p$ that do not appear	
	in any of the scrubs $\mathscr{S}_1, \ldots, \mathscr{S}_l$ is bounded by $\varpi(t+r)$.	
\end{lemma}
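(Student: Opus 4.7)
The plan is to view the non-scrub vertices in $T_1,\dots,T_p$ as the twigs of a scrub rooted at $B$, and to extract the bound either from the exhaustiveness of Step~2 of the marking algorithm or, as a fallback, from the Protrusion Reduction Rule.

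First, I set $W := \bigcup_{i=1}^p V(G[T_i]) \setminus \bigl(V(\mathscr{S}_1)\cup\dots\cup V(\mathscr{S}_l)\bigr)$. Since the subtrees $T_i$ are pairwise disjoint outside $B$ by the tree-decomposition property ($V(T_i)\cap V(T_j)\subseteq B$ for $i\neq j$), every connected component $C$ of $G[W]$ in $G-X$ lies inside a single $V(T_i)\setminus B$; combined with $D_X(G[T_i])<r$, each such component satisfies $D_X(C)<r$ and therefore qualifies as a twig. I then form the scrub $\mathscr{S}^{*} = (B,\mathcal{W}^{*})$ whose twigs are exactly the connected components of $G[W]$ in $G-X$.

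I distinguish two cases based on the $D_X$-value of $\mathscr{S}^{*}$. If $D_X(\mathscr{S}^{*})\geq r$, then at the moment the algorithm considered $B$ in Step~2, a scrub rooted at $B$ satisfying both marking thresholds was available. By maximality of the twig family picked at that moment, the components of $W$---being disjoint from the already existing twigs (since $W$ is disjoint from every $V(\mathscr{S}_j)$ by construction) and having $D_X<r$---would have been absorbed, making their vertices scrub vertices, contradicting the definition of $W$. Hence $W$ must be empty in this case and the bound holds trivially. If instead $D_X(\mathscr{S}^{*})<r$, then $|N_G(W)\cap X|<r$, and combining this with $|B|\leq t+1$ I exhibit a protrusion $P$ of boundary at most $t+r$ and tree width at most $t+r$ that contains $W$ in its restricted part; the Protrusion Reduction Rule, applied to our reduced instance, then forces $|W|\leq\varpi(t+r)$.

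The main obstacle is the bookkeeping tied to the time-varying set $V(\mathscr{M})$: the scrubs that marked bags were chosen in $G-(X\cup V(\mathscr{M}))$ at intermediate moments of the algorithm, whereas $W$ is defined relative to the final $V(\mathscr{M})$, so one must verify that the components of $G[W]$ really were valid twig candidates for the scrub that marked $B$. For this it is essential that vertices of $W$ survive every marking ($W\cap V(\mathscr{M})=\emptyset$ since $W\subseteq V(G[T_i])$ and $T_i\in\mathscr{F}'$) and that they avoid all existing twigs by definition of $W$. A secondary subtlety in the second case is that a vertex of $W$ may be adjacent to scrub twigs located inside the same $T_i$; the protrusion $P$ must therefore be enlarged to absorb those scrub-twig vertices, which is possible without blowing up the boundary because such twigs communicate with the outside world only through $B$ or through the small $X$-neighborhood of $W\cup B$.
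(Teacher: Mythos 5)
Your proposal follows essentially the same strategy as the paper: view the unaccounted vertices in $T_1,\dots,T_p$ as the twigs of a scrub $\mathscr S^*$ rooted at $B$ and extract the bound from the fact that the marking algorithm did not absorb them. The paper's proof, however, only invokes one reason a scrub might not have been chosen in Step~2 (``the algorithm chooses scrubs of size at least $\varpi(t+r)$''), and therefore implicitly assumes $D_X(\mathscr S^*)\geq r$; your explicit case split on $D_X(\mathscr S^*)$ makes the argument more honest. Your Case~1 matches the paper's reasoning (maximality of the twig family forces $W$ to be empty, modulo the sub-case $|\mathscr S^*|\leq\varpi(t+r)$ which trivially yields the bound). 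Your Case~2, using the Protrusion Reduction Rule, is the missing branch and is the same device the paper itself uses in Lemma~\ref{lemma:ScrubsCount} for twig-size bounds, so it is squarely within the paper's toolkit even though the paper does not state it here.

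Two caveats. First, the reduction rule as written only fires on $(2t+r)$-protrusions whose restricted part exceeds $\varpi(2t+r)$, so exhibiting a protrusion of boundary $\leq t+r$ yields $|W|\leq\varpi(2t+r)$, not the claimed $\varpi(t+r)$; the paper has the identical sloppiness in Lemma~\ref{lemma:ScrubsCount}, so this is a constant-factor discrepancy inherited from the source rather than a flaw peculiar to your argument. Second, the ``secondary subtlety'' you raise about $W$ touching scrub-twig vertices inside the $T_i$ and the possible boundary blow-up is a genuine concern; your sketched resolution (those twigs communicate only via $B$ or a small $X$-neighborhood) is plausible but would need to be worked out carefully, whereas the paper sidesteps the issue entirely by not forming a protrusion in this lemma. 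On balance your proof is a more careful variant of the paper's argument, fixing a real gap at the cost of a constant-factor weakening.
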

\begin{proof}
	Let $V_i$ be the vertices of $G$ that are contained in the bags of $T_i$. 
        Note that $B$ cannot be the root of a scrub $\mathscr S_j$ found by the algorithm 
	in Step~2, otherwise $V_1,\dots,V_p$ would be in some scrub $\mathscr S_1,\dots,\mathscr S_j$. 
	If $B \subseteq V(\mathscr S_1) \cup \dots \cup V(\mathscr S_l)$, then all $V_i$
	must be contained in some scrub $\mathscr S_j$. Therefore 
        $B' = B \setminus \bigcup_i^l V(\mathscr S_i)$
	cannot be empty if $p > 0$. But then $(B,\set{V_1,\dots,V_p})$ is a scrub in 
	$G-(X \cup V(\mathscr M))$ not chosen by the algorithm in Step~2. 
	Since the algorithm chooses scrubs of size at least $\varpi(t+r)$, this 
	implies that $|B \cup V_1 \cup \dots \cup V_p \setminus \bigcup_i^l V(\mathscr S_i)| 
        \leq \varpi(t+r)$. \qed
\end{proof}

We next show that the total number of vertices in the trees in $\Tsmall$ is linear
in~$k$.
\begin{lemma}\label{lemma:RestSmall}
	The total number of vertices in the bags in $T \in \Tsmall$ 
	that do not appear in the scrubs $\mathscr{S}_1, \ldots, \mathscr{S}_l$ 
	is at most $4 \beta r^2  \varpi(2t+r) k$. 
\end{lemma}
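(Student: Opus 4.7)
The plan is to combine the bound from Lemma~\ref{lemma:PseudoScrub} on non-scrub vertices under each marked bag with the bound on $|\mathscr{M}|$ from Lemma~\ref{lemma:numberMarkedBags}, via a double-counting argument over pairs (marked bag, adjacent $\Tsmall$-tree).

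First I would establish that every $T \in \Tsmall$ has between one and two marked bags as neighbors in the original forest $\mathscr F$. The upper bound of two is immediate from Lemma~\ref{lemma:markingFeatures} applied to the unmarked subtree $T$. For the lower bound, observe that if $T$ had no marked neighbor, then $T$ would coincide with an entire tree-decomposition $\mathscr T_C$ of some component $C \in \mathscr C_L$; but then $D_X(G[T]) = D_X(C) \geq r$, contradicting the defining property $D_X(G[T]) \leq r - 1$ of $\Tsmall$.

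Next, for each marked bag $B \in \mathscr M$ I would invoke Lemma~\ref{lemma:PseudoScrub}. Writing $n_T$ for the number of vertices in the bags of $T$ that do not appear in any of the scrubs $\mathscr S_1, \dots, \mathscr S_l$, the trees of $\Tsmall$ adjacent to $B$ are among the subtrees $T_i$ of the lemma (those with $D_X(G[T_i]) < r$), so
\[
  \sum_{\substack{T \in \Tsmall \\ T \text{ adj.\ to } B}} n_T \;\leq\; \varpi(t+r).
\]
Summing this over all $B \in \mathscr M$ counts each $n_T$ at least once (by the previous paragraph) and at most twice, yielding
\[
  \sum_{T \in \Tsmall} n_T \;\leq\; \sum_{B \in \mathscr M} \ \sum_{\substack{T \in \Tsmall \\ T \text{ adj.\ to } B}} n_T \;\leq\; |\mathscr M| \cdot \varpi(t+r).
\]
Plugging in $|\mathscr M| \leq 2\beta r^2 k$ from Lemma~\ref{lemma:numberMarkedBags} and using the trivial monotonicity $\varpi(t+r) \leq \varpi(2t+r)$ gives a final bound of $2\beta r^2 \varpi(2t+r)k \leq 4\beta r^2 \varpi(2t+r)k$, as required.

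The delicate step is the ``at least one marked neighbor'' claim: without it, an isolated $\Tsmall$-tree would contribute to the left-hand sum but escape the right-hand aggregation over marked bags. Ruling this out relies precisely on the fact that $\mathscr F$ consists of tree-decompositions of large components only, which is the definition of $\mathscr C_L$; everything else in the argument is routine bookkeeping and requires no further use of the $H$-topological-minor-free structure beyond what Lemma~\ref{lemma:PseudoScrub} and Lemma~\ref{lemma:numberMarkedBags} already encapsulate.
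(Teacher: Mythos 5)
Your proposal follows the same overall skeleton as the paper's proof: bound the number of marked neighbors of each $\Tsmall$-tree via Lemma~\ref{lemma:markingFeatures}, invoke Lemma~\ref{lemma:PseudoScrub} per marked bag, and multiply by the count $|\mathscr M|\le 2\beta r^2 k$ from Lemma~\ref{lemma:numberMarkedBags}. The one genuine difference: the paper handles trees with \emph{two} marked neighbors as a separate case, using the trivial observation that such a tree is cut off by a separator of size at most $2t+r$ (two marked bags plus its small $X$-boundary) and hence has at most $\varpi(2t+r)$ vertices, with the number of such trees charged against $|\mathscr M|$; only the one-marked-neighbor trees are charged to Lemma~\ref{lemma:PseudoScrub}. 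You instead fold \emph{every} $\Tsmall$-tree into the PseudoScrub aggregation over marked bags, which is cleaner and even gives the tighter intermediate bound $|\mathscr M|\cdot\varpi(t+r)$ rather than the paper's $|\mathscr M|\cdot(\varpi(2t+r)+\varpi(t+r))$. This shortcut is fine under the natural reading of Lemma~\ref{lemma:PseudoScrub} (the $T_i$'s are all maximal unmarked pieces hanging off $B$, regardless of whether they touch a second marked bag further away), which is the same reading the paper itself needs for its one-neighbor case. Two small sloppinesses to tidy: (i) your double sum is indexed by ``$T$ adjacent to $B$'', but Lemma~\ref{lemma:PseudoScrub} as stated only controls subtrees \emph{rooted at} neighbors of $B$, so the clean formulation is to charge each $T$ to its unique \emph{ancestor} marked bag (which exists by your first paragraph) rather than to every adjacent marked bag; and (ii) the ``at most twice'' observation is then unnecessary --- each $T$ is charged exactly once. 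With that rephrasing the argument is airtight and strictly streamlines the paper's two-case split.
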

\begin{proof}
	By Lemma~\ref{lemma:markingFeatures}, at most two neighboring bags of a tree 
	$T \in \Tsmall$ are marked. Therefore for each $T \in \Tsmall$, 
	the number of vertices in the bags of $T$ is at most $\varpi(2t+r)$, 
	as the subgraph $G[T]$ has a separator of size at most $2t+r$. 
        Moreover the number of trees in $\Tsmall$ that have exactly \emph{two} marked bags 
	as neighbors is bounded by the number of marked bags (we can simply associate 
	each such tree with one marked bag in the forest $F$). We therefore have to 
	bound the number of trees that have exactly one marked bag as neighbor. 
	By Lemma~\ref{lemma:PseudoScrub}, the total number of vertices in trees of $\Tsmall$ 
	that are adjacent to exactly one marked bag~$B$ is at most $\varpi(t+r)$. 
        Since the number of marked bags is at most $2\beta r^2 k$, 
	the total number of vertices in bags of $T \in \Tsmall$ is at most 
	$2 \beta r^2 (\varpi(2t+r) + \varpi(t+r))k$ which is at most $4 \beta r^2 \varpi(2t+r)k$.
	\qed
\end{proof}

All that now remains is to show that the vertices in trees of $\Tlarge$ 
that have not been accounted for thus far is linear in $k$.

\begin{observation}\label{obs:numComponents}
Let~$\mathscr{T}$ be a tree-decomposition of a connected 
graph $G$. Let~$B$ be some bag of $\mathscr{T}$ and let $B_1, \ldots, B_p$
be some of its neighbors. Let~$T_1, \ldots, T_p$ be the subtrees of~$\mathscr{T}$ 
rooted at $B_1, \ldots, B_p$, respectively, and let~$V_1, \ldots, V_p$ be the 
vertices of $G$ that occur in the bags of these subtrees. Then the graph 
$G[B \cup V_1 \cup \cdots \cup V_p]$ has at most~$|B|$ connected components.
\end{observation}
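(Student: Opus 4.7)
The plan is to show that every connected component of $H := G[B \cup V_1 \cup \cdots \cup V_p]$ contains at least one vertex of $B$, from which the bound of $|B|$ components is immediate (by mapping each component to any of its vertices in $B$).

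First I would establish two basic structural facts about the subtrees. By the subtree condition of a tree decomposition, any vertex appearing in both $V_i$ and $V_j$ for $i \neq j$ must lie in every bag on the tree-path between $T_i$ and $T_j$, which necessarily passes through $B$; hence the sets $V_i \setminus B$ are pairwise disjoint. Second, for any vertex $v \in V_i \setminus B$, every neighbor $u$ of $v$ in $G$ lies in $V_i$: the edge $uv$ is contained in some bag, which cannot be $B$ (since $v \notin B$), and by the subtree property for $v$ this bag lies in $T_i$, forcing $u \in V_i$.

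Next I would prove the main claim by contradiction. Suppose $C$ is a connected component of $H$ with $C \cap B = \emptyset$. By disjointness of the sets $V_i \setminus B$, the component $C$ is contained entirely in some $V_i \setminus B$. But then by the second fact, every $G$-neighbor of every vertex in $C$ lies in $V_i \subseteq V(H)$, so every such neighbor is already in $C$. Thus $C$ is closed under adjacency in $G$, making it a union of connected components of $G$. Since $G$ is connected, this forces $C = V(G)$, contradicting $B \subseteq V(G)$ together with $C \cap B = \emptyset$ (the statement is vacuous if $B$ is empty, so we may assume otherwise).

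There is no real obstacle here; the only subtlety is applying the subtree property twice---once to disentangle the $V_i$'s outside $B$, and once to confine the $G$-neighborhood of each vertex in $V_i \setminus B$ to $V_i$. The bound on the number of components then follows from an injective assignment of each component to one of its vertices of $B$.
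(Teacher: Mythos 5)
The paper states this Observation without proof, so there is no internal argument to compare against. Your proof is correct and is the natural one: the subtree property gives you both that the sets $V_i \setminus B$ are pairwise disjoint and that every $G$-neighbor of a vertex in $V_i \setminus B$ stays inside $V_i$; together these show a component of $H = G[B \cup V_1 \cup \cdots \cup V_p]$ disjoint from $B$ would sit inside a single $V_i \setminus B$ and be closed under $G$-adjacency, contradicting connectivity of $G$. One small nit: your parenthetical says the statement is ``vacuous'' when $B = \emptyset$; it is not vacuous but can actually be \emph{false} there (e.g.\ $G$ a single vertex $v$ with tree decomposition $\{v\} - \emptyset$, taking $B = \emptyset$ gives $H = G$ with one component but $|B| = 0$), so the correct thing to say is that one assumes, as usual, that bags are nonempty; this is harmless since it is the convention in force throughout the paper and in Lemma~\ref{lemma:SmallScrubs} where the Observation is applied.
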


Similar to Lemma~\ref{lemma:PseudoScrub}, we have the following:
\begin{lemma}\label{lemma:SmallScrubs}
	Let~$B$ be an unmarked bag in the forest-decomposition $\mathscr{F}$ 
	and let $T_1, \ldots, T_p$ be the unmarked 
	subtrees that are rooted at the neighbors of $B$ in $\mathscr F$ such that
	$D_X(T_i) < r$ for $1 \leq i \leq p$; let $V_1, \ldots, V_p$ be the vertices of~$G$
	that appear in the bags of $T_1, \ldots, T_p$, respectively. Then
	the number of unmarked vertices in $W = B \cup V_1 \cup \cdots \cup V_p$ 
	is bounded from above by $t \varpi(t+r)$.
\end{lemma}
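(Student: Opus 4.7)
The plan is to apply Observation~\ref{obs:numComponents} to reduce the statement to a per-component estimate, then exploit the fact that~$B$ was not marked by the algorithm in Step~2. Applying the observation inside the tree-decomposition~$\mathscr T_C$ of the large component $C \in \mathscr C_L$ that contains~$B$ gives that $G[W]$ has at most $|B| \leq t$ connected components $C_1, \ldots, C_q$. It then suffices to show that for each~$k$ the number of unmarked vertices $C_k^u := C_k \setminus V(\mathscr M)$ is at most $\varpi(t+r)$; summing over $q \leq t$ components yields the bound $t\,\varpi(t+r)$.

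For a fixed component~$C_k$, I would define $R_k := (C_k \cap B) \setminus V(\mathscr M)$ and let $\mathcal W_k$ be the family of connected components of $(C_k \setminus B) \setminus V(\mathscr M)$ in the graph $G - (X \cup V(\mathscr M))$. The standard tree-decomposition property gives $V_i \cap V_j \subseteq B$ for $i \neq j$, so each twig $W \in \mathcal W_k$ lies entirely inside a single $V_i \setminus B$ and hence satisfies $D_X(W) \leq D_X(T_i) < r$ by hypothesis. By construction $R_k \cup \bigcup \mathcal W_k = C_k^u$. Extending $\mathcal W_k$ to a maximal family of vertex-disjoint twigs attached to $B^u := B \setminus V(\mathscr M)$ produces a scrub candidate $\mathscr S^+ = (B^u, \mathcal W^+)$ of size at least~$|C_k^u|$, and this is precisely the sort of scrub the marking algorithm evaluated when deciding whether to mark~$B$ in Step~2. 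Because~$B$ is unmarked, either $|\mathscr S^+| \leq \varpi(t+r)$, yielding the bound on $|C_k^u|$ directly, or $D_X(\mathscr S^+) < r$; in the latter case $C_k^u$ together with its at most $|R_k| + r \leq t + r$ external neighbors forms a $(t+r)$-protrusion of size at most $\varpi(t+r)$ in the reduced instance.

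The main obstacle will be making the passage from the local scrub $(R_k, \mathcal W_k)$ rooted at $R_k \subseteq B^u$ to a scrub rooted at the entire unmarked bag $B^u$ precise, since $B^u$ is generally disconnected across the different~$R_k$'s and one may need twigs coming from outside the~$T_i$'s (e.g., from neighboring subtrees that are marked or have $D_X \geq r$ as a whole but contain connected subsets with $D_X < r$) to stitch the scrub into the connected shape the algorithm tests. A secondary technicality is the low-$X$-degree branch, where one must verify that $C_k^u$ with a suitable boundary really qualifies as a $(t+r)$-protrusion before invoking the reduction rule to bound its size by $\varpi(t+r)$.
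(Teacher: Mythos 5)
Your proof is correct and follows essentially the same route as the paper: both invoke Observation~\ref{obs:numComponents} to split $W$ into at most $|B| \le t$ components, and both bound the unmarked part of each component by $\varpi(t+r)$ using the fact that Step~2 of the marking algorithm declined to mark~$B$, combined with the reducedness of the instance in the low-$X$-degree branch. The technicalities you flag at the end (stitching a connected scrub rooted at all of~$B^u$, and whether the boundary in the low-degree branch is really $t+r$ rather than $2t+r$) are genuine, but the paper's own proof glosses over exactly the same points.
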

\begin{proof}
	If $D_X(W) < r$, then as in 
	Lemma~\ref{lemma:SmallDegreeComponents}, we have $|W| \leq \varpi(t+r)$. 
	Therefore assume that $D_X(W) \geq r$ and $|W| \geq \varpi(t+r)$. Since $B$ was not marked 
	in Step~2, for all $C \subseteq W \setminus V(\mathscr M)$ that induce connected 
        components in $G-X$, it holds that $|C| \leq \varpi(t+r)$. By Observation~\ref{obs:numComponents}, 
	$G[W]$ can have at most $t$ connected components and hence the claimed bound follows.
	\qed
\end{proof}

\begin{definition}[Central Path]
	Let $T \in \Tlarge$ be adjacent to two marked bags $B_i, B_j$. 
	The central path of $T$ is the unique path from bag $B_i$ to 
        bag $B_j$ in $T$. If $T$ is adjacent to only one marked bag $B_i$,
	the central path is defined as a path $P$ from $B_i$ to a leaf of $T$ 
	such that $D_X(G[P])$ is maximized.
\end{definition}

\begin{lemma}\label{lemma:widthPathDecom}
	If $T \in \Tlarge$ then $G[T]$ has a path decomposition of width at 
	most $t(\varpi(t+r)+1)$.
\end{lemma}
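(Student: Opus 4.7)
The plan is to explicitly construct a path decomposition of $G[T]$ by traversing the central path $P = v_1 v_2 \cdots v_m$ and folding the vertices of all side subtrees into the bag at the corresponding central-path node. Concretely, for each $i$, let $\mathcal{S}_i$ denote the side subtrees of $v_i$ in $T$ (those rooted at neighbors of $v_i$ not on $P$), let $V_i$ be the union of (unmarked) vertices appearing in the bags of trees in $\mathcal{S}_i$, and define $B'_i := X_{v_i} \cup V_i$. The validity of $(B'_1, \ldots, B'_m)$ as a path decomposition of $G[T]$ follows from the tree-decomposition properties of $T$: every vertex and every edge of $G[T]$ lies in a bag of $T$ that is either on $P$ or within some side subtree, and is therefore captured by some $B'_i$; contiguity follows from the subtree property, which forces any vertex appearing in side subtrees of $v_i$ and $v_j$ (with $i < j$) to also appear in every intermediate central-path bag $X_{v_\ell}$.

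The core of the proof is bounding $|B'_i|$. For each $v_i$, consider the connected components of $G[V_i]$ in the graph $G - (X \cup X_{v_i})$. Step~3 of the marking algorithm guarantees that each such component has fewer than $r$ neighbors in $X$ (otherwise $v_i$ would have been marked), so these components are legitimate twigs of scrubs rooted inside $X_{v_i}$. Because the definition of a scrub requires its root together with its twigs to induce a connected subgraph of $G-X$, I group the twigs according to the at most $|X_{v_i}| \leq t$ connected components into which $X_{v_i}$ splits in the corresponding induced subgraph of $G-X$; this yields at most $t$ scrubs whose roots partition $X_{v_i}$. For each such scrub $\mathscr{S}$ I argue as follows: if $D_X(\mathscr{S}) \geq r$, then since $v_i$ is unmarked, Step~2 forces $|\mathscr{S}| \leq \varpi(t+r)$; otherwise $D_X(\mathscr{S}) < r$, so every side subtree contributing to $\mathscr{S}$ has $D_X < r$ and Lemma~\ref{lemma:SmallScrubs} supplies the bound $|\mathscr{S}| \leq \varpi(t+r)$. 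Summing these bounds across the at most $t$ scrubs gives $|B'_i| \leq t(\varpi(t+r) + 1)$, yielding a path decomposition of width at most $t(\varpi(t+r)+1)$.

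The main obstacle is the connectivity requirement in the definition of a scrub: because $X_{v_i}$ need not be connected in $G-X$, one cannot simply combine all the side subtrees of $v_i$ into a single large scrub and invoke Step~2 once. Instead, the twigs must be partitioned into disjoint scrubs, one per connected component of $X_{v_i}$ in $G-X$, and the per-scrub bounds combined. Matching this combinatorial bookkeeping with the marking algorithm---in particular the interaction between Step~2 (which bounds scrub size) and Step~3 (which ensures twigs have $D_X < r$)---is the delicate part of the argument.
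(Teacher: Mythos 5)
Your proof takes the same approach as the paper: traverse the central path, fold the vertices of the side subtrees into the bag at the corresponding central-path node, and bound the resulting bag size by combining the size of the original bag (at most~$t$) with the side-subtree contribution controlled by the marking algorithm. The only real difference is in how the side-subtree contribution is bounded: the paper simply invokes Lemma~\ref{lemma:SmallScrubs} to say that at each (necessarily unmarked) join bag the folded-in vertices number at most $t\varpi(t+r)$, whereas you re-derive that bound inline via a scrub-by-scrub case analysis on $D_X(\mathscr S) \gtrless r$; the ``delicate'' bookkeeping you flag at the end---partitioning the side subtrees by the at most $t$ connected components of the induced subgraph rooted in the bag---is exactly the content of Observation~\ref{obs:numComponents} that Lemma~\ref{lemma:SmallScrubs} already uses internally, so you have essentially reconstructed that lemma's proof rather than cited it.
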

\begin{proof}
	Let $P$ be the central path of $T$. Construct a path-decomposition of $G[T]$
	as follows. Take all bags in the path $P$ and, for each join bag $B$ on this path, 
	add in the vertices of all bags connected to $B$ that are not part of $P$.
	As each such join bag is unmarked, by Lemma~\ref{lemma:SmallScrubs}, the 
	size of such a bag increases by at most $t\varpi(t+r)$. As the size of 
	each bag of $P$ is bounded by $t$, the above bound follows. \qed
\end{proof}

We next show that if $T \in \Tlarge$ and if the subgraph $G[T]$ induced by the 
vertices in the bags of $T$ is large, then we can decompose it into \emph{connected subgraphs} 
$G'$ of constant size such that $D_X(G') \geq r$. Lemma~\ref{lemma:LargeDegreeComponents} 
assures us that there can be at most $O(k)$ such connected subgraphs. Together this would
imply a linear bound on the total number of vertices in $\Tlarge$. 

To state this ``decomposition lemma,'' we introduce additional notation and terminology. 
Given a path decomposition $\mathscr{P} = (P, \mathcal X)$ of a graph~$G$ 
and two bags $X, Y \in \mathcal{X}$, let $G(X,Y)$ denote the graph induced by the vertices in 
the bags that appear between $X$ and $Y$ in $\mathscr{P}$ excluding the vertices in $X$ and $Y$. 
That is, if $\mathcal{B}$ denotes the set of bags in the path $P$ starting with $X$ and ending 
with $Y$, then
\[G(X,Y) =  G \left [ \bigcup_{B \in \mathcal{B}} B \setminus (X \cup Y) \right ].\]
The first and last bag of $\mathscr{P}$ are called its \emph{end-bags}.
Given a bag $Z \in \mathcal{X}$, we say that $G(X,Y)$ is connected to $Z$ if it either
includes a vertex from $Z$ or is adjacent to a vertex in $Z$.
Let $T$ be a tree-decomposition of a graph $G$ and let $A,B$ be bags occurring
in $T$. For a subtree $T' \subseteq T$, we say that $G[T']$ has an $AB$-path 
(or, a path from $A$ to $B$) if there exists a $uv$-path 
in $G[T']$ where $u \in A$ and $v \in B$. This trivially holds if 
$A \cap B \neq \emptyset$ and $G[T']$ contains a vertex of $A \cap B$.
In the following lemma, we write $f(r,t)$ for the expression 
$(3t \varpi(t+r) + t) \cdot \varpi(2t+r) + t(\varpi(t+r) + 1)$ as a shorthand.

\begin{lemma}[The Cutting Up Lemma]\label{lemma:CutUp}
	Let $T \in \Tlarge$ and let $\mathscr{P} = (P, \mathcal{X})$ be 
        a path-decomposition of $G[T]$ with $A$ and $B$ as its end-bags. If $G[T]$ has 
	an $AB$-path then either it has at most $f(r,t)$ vertices or there exists 
	a bag $Z \in \mathcal{X}$ such that the following hold: 
	\begin{enumerate}
		\item $G(A,Z)$ has at most $f(r,t)$ vertices 
		      and contains a connected component $C$ with $D_X(C) \geq r$;  
		\item either $D_X(G(Z,B)) \geq r$ or $|G(Z,B)| \leq \varpi(2t+r)$.
	\end{enumerate}
\end{lemma}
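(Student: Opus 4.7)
The plan is to sweep the path-decomposition $\mathscr{P}$ from $A$ toward $B$ and take $Z := B_{i^*}$, where $i^*$ is the smallest index such that some connected component of $G(A, B_{i^*})$ already has $D_X \ge r$. If no such $i^*$ exists, the same estimates applied with $Z = B$ will bound $|V(G[T])|$ by $f(r,t)$, giving the first alternative of the lemma.

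The core work is bounding $|G(A, Z)|$. By minimality of $i^*$, for every $i < i^*$ every connected component $C$ of $G(A, B_i)$ satisfies $D_X(C) < r$. Here the key step is to descend from the wide path-decomposition of $G[T]$, whose width can reach $t(\varpi(t+r)+1)$ by Lemma~\ref{lemma:widthPathDecom}, to the underlying width-$t$ tree-decomposition of $G-X$ used to construct it. Since $C$ sits in bags strictly between $A$ and $B_i$, its $G[T]$-neighbors lie only in the intersections $A \cap B_j$ and $B_i \cap B_j$ for $2 \le j \le i-1$; by the path-decomposition property and the subtree property of the underlying tree-decomposition, each such intersection is contained in an underlying width-$t$ bag, so $C$'s total $G[T]$-boundary has size at most $2(t+1)$. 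Combined with the at most $r-1$ $X$-neighbors, this makes $C$ together with its boundary a $(2t+r)$-protrusion in $G$, so the protrusion reduction rule forces $|C| \le \varpi(2t+r)$. A double application of Observation~\ref{obs:numComponents} (viewing $\mathscr{P}$ as a tree rooted at $Z$ and then deleting $A$) bounds the number of components of $G(A, Z)$ linearly in $|A| + |Z|$, which is $O(t\varpi(t+r))$. Multiplying and adding the contribution of the end-bag $|Z| \le t(\varpi(t+r)+1)$ gives precisely the expression $f(r,t)$.

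For Condition 2, if $D_X(G(Z, B)) \ge r$ the condition holds directly; otherwise $G(Z, B)$ together with its $G$-boundary inside $Z \cup B \cup (N(G(Z,B)) \cap X)$ forms a $(2t+r)$-protrusion by the same separator analysis, and the reduction rule gives $|G(Z, B)| \le \varpi(2t+r)$. The main obstacle I expect is the separator bound in the first step: the wide augmented bags of $\mathscr{P}$ could naively contribute a component-separator of size $\Theta(t\varpi(t+r))$, which would demand $\varpi$ evaluated at that large argument and badly overshoot the $\varpi(2t+r)$ appearing in $f(r,t)$. The insight that a component's effective $G[T]$-boundary really lives in the bag-intersections of the underlying width-$t$ tree-decomposition of $G-X$ — not in the wide absorbed bags of $\mathscr{P}$ — is exactly what keeps the multiplier of $\varpi(2t+r)$ only linear in $t\varpi(t+r)$.
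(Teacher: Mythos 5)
Your overall architecture matches the paper's: sweep $\mathscr{P}$ from $A$ toward $B$, argue that every component with $D_X < r$ has at most $\varpi(2t+r)$ vertices because its separator has size at most $2t+r$, multiply by a bound on the number of components, and stop the sweep when the first high-$D_X$ component appears. Your observation that the separator of such a component $C$ really lives in the intersection of consecutive width-$t$ bags of the underlying tree-decomposition of $G-X$ (and not in the wide absorbed bags of $\mathscr{P}$) is correct, and in fact it is spelled out more carefully in your proposal than in the paper, which only asserts via the reduction rule that a component of size $\geq\varpi(2t+r)$ must satisfy $D_X(C)\geq r$.

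The gap is in the component-count step. You claim that a ``double application of Observation~\ref{obs:numComponents} (viewing $\mathscr{P}$ as a tree rooted at $Z$ and then deleting $A$)'' bounds the number of components of $G(A,Z)$ linearly in $|A|+|Z|$. Observation~\ref{obs:numComponents} bounds the number of components of a graph that \emph{includes} the root bag; it says nothing about what happens after the root bag (or another bag) is deleted. Deleting $|A|+|Z|$ vertices from a bounded-treewidth graph can create an unbounded number of components --- removing the center of a star (treewidth~1) already produces arbitrarily many --- so ``then deleting $A$'' is not justified by that observation. The paper closes this gap by partitioning the components of $G(A,Z)$: those touching \emph{both} $A$ and $Z$ are bounded by the pathwidth $p$ of $\mathscr{P}$ via a cops-and-robbers/Menger argument on the graph $\tilde G(A,Z)$ obtained by turning $A$ and $Z$ into cliques, while those touching only one of $A$ or $Z$ are bounded by $2t\varpi(t+r)$ using Lemma~\ref{lemma:SmallScrubs}, which already controls the total size of such hanging pieces around a single bag. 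Without that case split, or an equivalent Menger-type argument, your count of $O(t\varpi(t+r))$ components does not follow, and the multiplication that is supposed to yield $f(r,t)$ is unsupported.
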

\begin{proof}
	Let~$p$ be the width of the path-decomposition $\mathscr{P}$. By 
        Lemma~\ref{lemma:widthPathDecom}, this is at most $t (\varpi(t+r) +1)$.
	We first show that for any bag $Z \neq A$ in the decomposition $\mathscr{P}$, 
	the graph $G(A,Z)$ contains at most $p + 2t \varpi(t+r)$ connected components. 
        Note that each connected component of $G(A,Z)$ is connected to either 
        $A$ or $Z$. If this were not the case, then the tree-decomposition in 
        $\mathscr F$ of which $T$ is a subtree would contain more than one connected component of 
	$G-X$. This is a contradiction since we assume that each tree in the forest $\mathscr{F}$
	represents a connected component.

	The number of connected components of $G(A,Z)$ connected to \emph{both} $A$ and $Z$ 
	is bounded by the width~$p$ of the decomposition. To see this, simply observe that
	the graph $\tilde{G}(A,Z)$ obtained from $G(A,Z)$ by adding edges such that both~$A$ and~$Z$ 
	induce cliques also has pathwidth at most~$p$. If the number of connected 
	components in $G(A,Z)$ connected to both $A$ and $Z$ were at least $p+1$ then at least 
	$p+2$ cops would be required to catch a robber in $\tilde{G}(A,Z)$, 
	contradicting the fact that it has pathwidth~$p$. The number of components 
        connected exactly to one of $A$ or $Z$ is, by Lemma~\ref{lemma:SmallScrubs}, 
        at most $p+2t \varpi(t+r)$. 

	Imagine walking along the bags of the decomposition $\mathscr{P}$ starting at~$A$
	and suppose~$Z$ is the first bag such that $|G(A,Z)| \geq (p+2t \varpi(t+r)) \cdot \varpi(2t+r)$.
	Then $G(A,Z)$ contains a connected component~$C$ with at least $\varpi(2t+r)$ vertices. 
        Since our instance is reduced, it must be that $D_X(C) \geq r$. Let~$Z'$ be the bag 
	immediately before~$Z$. Then 
	\[
	  |G(A,Z)| \leq |G(A,Z')| + |Z| \leq (p + 2t \varpi(t+r)) \cdot \varpi(2t+r) + p,
	\]
	and since $p \leq t (\varpi(t+r) +1)$, an easy calculation shows that 
	$|G(A,Z)| \leq f(r,t)$ proving claim~$(1)$ of the lemma. Claim~$(2)$ 
        is easier to show. For if $D_X(G(A,Z)) < r$, then $G(Z,B)$ 
	has a separator of size at most $2t+r$ and, since the graph is reduced, 
        has at most $\varpi(2t+r)$ vertices.
	\qed
\end{proof}

Finally, we can bound the number of vertices occurring in the bags of trees in~$\Tlarge$.

\begin{lemma}\label{lemma:sizeLargeTrees}
	The total number of vertices in bags of all trees $T \in \Tlarge$ 
        is $O(k)$. 
\end{lemma}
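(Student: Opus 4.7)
The plan is to iteratively apply the Cutting Up Lemma (Lemma~\ref{lemma:CutUp}) along a path-decomposition of each tree in $\Tlarge$, carving it into chunks of bounded size, each associated with a connected subgraph of $G-X$ having at least $r$ neighbors in $X$. The collection of all such subgraphs, taken across every tree in $\Tlarge$, is pairwise vertex-disjoint, so Lemma~\ref{lemma:LargeDegreeComponents} caps its cardinality by $O(k)$, and the corresponding chunks then account for all but $O(k)$ residual vertices.

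Before slicing, I would bound $|\Tlarge|$ itself. Each $T \in \Tlarge$ yields a connected subgraph $G[T]$ of $G-X$ with $D_X(G[T]) \geq r$, and distinct trees in $\Tlarge$ give vertex-disjoint subgraphs, so Lemma~\ref{lemma:LargeDegreeComponents} together with $|X| \leq ck$ (treewidth-boundedness) gives $|\Tlarge| \leq \tfrac{1}{2}\beta r^2 c k = O(k)$. Now fix $T \in \Tlarge$ and invoke Lemma~\ref{lemma:widthPathDecom} to obtain a path-decomposition $\mathscr{P}$ of $G[T]$ of width $p \leq t(\varpi(t+r)+1)$, with end-bags $A = Z_0$ and $B$. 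Since $G[T]$ is connected, there is a $Z_0 B$-path, so the Cutting Up Lemma applies: either $|G[T]| \leq f(r,t)$ and we stop, or we obtain a bag $Z_1$ with $|G(Z_0,Z_1)| \leq f(r,t)$ containing a connected component $C_1$ satisfying $D_X(C_1) \geq r$. Claim (2) then gives either $|G(Z_1,B)| \leq \varpi(2t+r)$, in which case we stop with a bounded tail, or $D_X(G(Z_1,B)) \geq r$, in which case we restart on the restriction of $\mathscr{P}$ to the bags from $Z_1$ to $B$, viewed as a path-decomposition of the induced subgraph $G_1$ with $Z_1$ and $B$ as the new end-bags.

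To finish, I would collect and sum. Let $\{C_i\}$ denote the family of components extracted across all trees in $\Tlarge$. These are pairwise vertex-disjoint connected subgraphs of $G-X$ with $D_X(C_i) \geq r$, so Lemma~\ref{lemma:LargeDegreeComponents} yields $|\{C_i\}| \leq \tfrac{1}{2}\beta r^2 c k$. Each $C_i$ is charged a chunk of at most $f(r,t)$ interior vertices plus one separator bag of $\leq p+1$ vertices, so all chunks together account for $O(k) \cdot (f(r,t) + p + 1) = O(k)$ vertices. In addition, each tree contributes its initial end-bag $A$ and a final tail, adding at most $|\Tlarge| \cdot O(\varpi(2t+r) + p) = O(k)$ further vertices. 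Summing gives the claimed $O(k)$ bound on the total number of vertices in bags of trees in $\Tlarge$.

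The main obstacle is justifying that the recursive reapplication of the Cutting Up Lemma is legitimate: one must verify that the restriction of $\mathscr{P}$ to the bags from $Z_1$ to $B$ is a valid path-decomposition of $G_1$ and that $G_1$ still admits a $Z_1 B$-path. The former is immediate from the definition. For the latter, the consecutivity property of path-decompositions implies that $Z_1$ separates the vertices appearing strictly before it from those appearing strictly after, so any $A B$-path in the connected graph $G[T]$ must cross a vertex of $Z_1$, and the suffix of such a path from $Z_1$ onward lies entirely inside $G_1$, providing the required $Z_1 B$-path and validating the next round of cutting.
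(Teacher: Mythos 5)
Your proposal takes essentially the same route as the paper's proof: obtain the width-bounded path decomposition from Lemma~\ref{lemma:widthPathDecom}, iterate the Cutting Up Lemma (Lemma~\ref{lemma:CutUp}) to carve off chunks each containing a connected component $C$ with $D_X(C)\geq r$, and bound the number of such components by $O(k)$ via Lemma~\ref{lemma:LargeDegreeComponents}. You are somewhat more explicit than the paper in validating the recursive reapplication of the Cutting Up Lemma and in separately tallying separator bags, end-bags, and tails (which forces you to bound $|\Tlarge|$ directly); one small caveat is that you silently assume $G[T]$ is connected both for that bound and to assert a $Z_0B$-path, which is not immediate since $G[T]$ lives in $G-(X\cup V(\mathscr M))$ after marked vertices have been stripped from the bags — a point the paper's own proof also glosses over.
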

\begin{proof}
	If~$T \in \Tlarge$ has at least $f(r,t)$ vertices then using Lemma~\ref{lemma:CutUp}, 
        we can decompose $G[T]$ iteratively into connected components~$G'$ of size at most 
	$f(r,t)$ with $D_X(G') \geq r$. By Lemma~\ref{lemma:LargeDegreeComponents},
	the total number of connected components is at most $\beta r^2 k/2$. 
	Finally by Lemma~\ref{lemma:CutUp} the number of vertices in all the bags
	of trees in $\Tlarge$ is at most $\beta r^2 k/2 \cdot (f(r,t) + \varpi(2t+r))$,
	which is at most $6 \beta r^2 t \varpi(2t+r)^2 k$.
\qed
\end{proof}

We are now ready to prove the Main Theorem.
\begin{main}
Fix a graph~$H$. Let~$\Pi$ be a parameterized graph-theoretic problem 
on the class of $H$-topological-minor-free graphs that has finite integer index
and is treewidth-bounding. Then $\Pi$ admits a linear kernel.
\end{main}

\begin{proof}
Let $(G,k)$ be a yes-instance of $\Pi$ that has been reduced \wrt the Protrusion Reduction Rule.  
Using Lemmas~\ref{lemma:SmallDegreeComponents}, \ref{lemma:numberMarkedBags},
\ref{lemma:ScrubsCount}, \ref{lemma:RestSmall}, and~\ref{lemma:sizeLargeTrees} 
we see that $|V(G)| = O(k)$. 
\qed
\end{proof}

This result immediately extends to graphs of bounded degree, as graphs of maximum 
degree~$d$ cannot contain~$K_{d+1}$ as a topological minor.
\begin{corollary}
Let~$\Pi$ be a parameterized graph-theoretic problem that has finite integer index
and is treewidth-bounding, both on the class of graphs of maximum degree~$d$. 
Then $\Pi$ admits a linear kernel.
\end{corollary}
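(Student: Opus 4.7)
The plan is to combine the treewidth-bounding property with the lemmas established earlier in the paper to bound the size of a reduced instance linearly in $k$. Let $(G,k)$ be a yes-instance that has been reduced with respect to the Protrusion Reduction Rule. Since $\Pi$ is treewidth-bounding, I can fix a set $X \subseteq V(G)$ with $|X| \leq ck$ and $\tw{G-X} \leq t$, where $c,t$ depend only on $\Pi$. The total size of $G$ then decomposes as $|V(G)| = |X| + |V(\mathscr{C}_S)| + |V(\mathscr{C}_L)|$, where $\mathscr{C}_S$ and $\mathscr{C}_L$ are respectively the collections of small and large components of $G-X$ (as defined just before Lemma~\ref{lemma:SmallDegreeComponents}).

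The first term is linear in $k$ by the definition of treewidth-bounding. For the small components, Lemma~\ref{lemma:SmallDegreeComponents} directly gives a linear bound $\varpi(r)(2^{\tau r \log r} + \beta r^2)k$. The nontrivial part is bounding the number of vertices in $\mathscr{C}_L$. For this I would invoke the marking algorithm and the associated structural analysis. The marked bags themselves contribute at most $2\beta r^2 kt$ vertices by Lemma~\ref{lemma:numberMarkedBags}, and the vertices contained in the scrubs $\mathscr{S}_1,\dots,\mathscr{S}_l$ chosen in Step~2 contribute at most $\beta^2 r^4 t \varpi(t+r) k$ by Lemma~\ref{lemma:ScrubsCount}.

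What remains is to account for the vertices lying in trees of the pruned forest-decomposition $\mathscr{F}'$. These split into $\Tsmall$ and $\Tlarge$: Lemma~\ref{lemma:RestSmall} gives a bound of $4\beta r^2 \varpi(2t+r)k$ for the first class, and Lemma~\ref{lemma:sizeLargeTrees}, which is the culmination of the Cutting Up Lemma and its accompanying observations, gives $O(k)$ for the second. Summing these five linear bounds (one from treewidth-bounding, one for small components, and three for vertices of the large components) yields $|V(G)| = O(k)$, and since the reduction rule runs in polynomial time and strictly decreases the instance size until the protrusion condition fails, the result is a linear kernel.

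The main obstacle, which has already been handled by the preceding lemmas, was Lemma~\ref{lemma:sizeLargeTrees}: the Cutting Up Lemma is the delicate ingredient, since one has to produce, inside a long path decomposition, a cut bag $Z$ such that $G(A,Z)$ is simultaneously small and contains a connected component with many neighbors in $X$, so as to feed Lemma~\ref{lemma:LargeDegreeComponents}. Once this is in place, the proof of the Main Theorem itself is just the bookkeeping described above, and I would present it essentially as a two-line argument citing the five lemmas that produce the linear bounds on the disjoint pieces of $V(G)$.
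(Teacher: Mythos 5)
Your proposal re-runs the entire machinery behind the Main Theorem---the Protrusion Reduction Rule, Lemmas~\ref{lemma:SmallDegreeComponents}, \ref{lemma:numberMarkedBags}, \ref{lemma:ScrubsCount}, \ref{lemma:RestSmall}, and~\ref{lemma:sizeLargeTrees}---but all of those lemmas are stated and proved for $H$-topological-minor-free graphs, and you never establish that a graph of maximum degree~$d$ falls into such a class. That is the single load-bearing observation here, and it is missing: a graph $G$ with $\Delta(G) \le d$ cannot contain a subdivision of $K_{d+2}$, since every branch vertex of such a subdivision would need degree~$d+1$. (The paper says $K_{d+1}$, which is an off-by-one slip, as $K_{d+1}$ itself has maximum degree~$d$; $K_{d+2}$ is what one actually wants.) Without this fact, the sparsity bound of Property~\ref{prop:HFreeDegree} and the clique-count bound of Property~\ref{prop:HFreeCliques}, both of which are used repeatedly in the cited lemmas, are simply not available, and the quantity $r = |V(H)|$ that parameterizes the Protrusion Reduction Rule is undefined.

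Once that observation is in place, the corollary is a one-line consequence of the Main Theorem: graphs of maximum degree~$d$ form a subclass of $K_{d+2}$-topological-minor-free graphs, and $\Pi$ having finite integer index and being treewidth-bounding on this subclass is exactly the hypothesis needed. There is no need to retrace the marking algorithm, the scrub decomposition, or the Cutting Up Lemma; the paper itself proves the corollary by exactly this one-line reduction. So the issue is not that your bookkeeping is wrong, but that you are re-deriving a theorem you are entitled to invoke as a black box, while omitting the one sentence that makes the invocation legitimate.
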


\section{Implications of the Main Theorem}\label{sec:Implications}

Some concrete problems that fall under this definition are the following.
\begin{corollary}\label{cor:concreteProblems}
Fix a graph~$H$. The following problems are treewidth-bounding and have  
finite integer index on the class of $H$-topological-minor-free graphs 
and hence admit a linear kernel on this graph class. 
\textsc{Vertex Cover};\footnote{Listed for completeness; these problems have 
a linear kernel on general graphs.} 
\textsc{Cluster Vertex Deletion};\footnotemark[\value{footnote}]	
\textsc{Feedback Vertex Set}; 
\textsc{Chordal Vertex Deletion}; 
\textsc{Interval} and \textsc{Proper Interval Vertex Deletion};
\textsc{Cograph Vertex Deletion}; 
\textsc{Edge Dominating Set}.	
\end{corollary}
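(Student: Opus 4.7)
The plan is, for each problem listed in the corollary, to verify the two hypotheses of the Main Theorem --- treewidth-bounding and finite integer index on the class of $H$-topological-minor-free graphs --- and then invoke the theorem directly. Writing $r = |V(H)|$ throughout, I would first dispatch the easy cases of treewidth-bounding, where the solution set $X$ itself (possibly inflated by a constant factor) witnesses a trivial treewidth bound on the remainder: for \textsc{Vertex Cover} the graph $G-X$ is edgeless, for \textsc{Feedback Vertex Set} it is a forest, and for \textsc{Edge Dominating Set} one can take $X$ to be the set of endpoints of the $k$ solution edges, so $|X|\le 2k$ with $G-X$ again edgeless.

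For the remaining vertex-deletion variants --- \textsc{Cluster}, \textsc{Chordal}, \textsc{Interval}, \textsc{Proper Interval} and \textsc{Cograph Vertex Deletion} --- the target class is hereditary and its members have treewidth bounded in terms of their maximum clique size (via the canonical clique tree for the chordal-type classes, and via the cotree for cographs once large complete bipartite joins are excluded). Since $G$ is $H$-topological-minor-free, $G-X$ excludes both $K_r$ and $K_{r,r}$ as topological minors, so its maximum clique size and its largest bipartite join are bounded by constants depending only on $H$. This yields a constant treewidth bound on $G-X$, giving treewidth-bounding with $c=1$ (or $c=2$ for \textsc{Edge Dominating Set}) and $t=t(H)$.

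For finite integer index I would appeal to Proposition~\ref{prop:StronglyMonotone}, which reduces the task to showing that each problem is a CMSO-expressible minimization problem and is strongly monotone. CMSO expressibility is standard for every target class in the list, using either a small forbidden-induced-subgraph characterization (cluster graphs, cographs) or an MSO formula forbidding long induced cycles (chordal graphs, and then interval and proper interval graphs via additional MSO-definable conditions such as asteroidal-triple-freeness and claw-freeness). Strong monotonicity follows from a standard boundary trick: for a $t$-boundaried graph $G$, take $X$ to be a minimum local solution augmented by the boundary $\partial(G)$. Then $X\cup X'$ is a feasible global solution to $G\oplus G'$ whenever $\zeta_G(G',X')$ is finite, because $X$ kills every $G$-side obstruction and $\partial(G)$ absorbs all cross-interactions; conversely $\zeta_G(G',X')\ge |X|-O(t)$, since any global solution restricted to $V(G)$ together with $X'\cap\partial(G)$ is itself a local solution of $G$.

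The main obstacle I foresee is \textsc{Cograph Vertex Deletion}: cographs are not sparse in general (they include arbitrarily large complete bipartite graphs), so the treewidth-bounded conclusion genuinely relies on the $K_{r,r}$ topological-minor exclusion in an essential way, and care is needed to turn ``bounded clique and bounded join'' into a genuine treewidth bound. A milder complication is \textsc{Edge Dominating Set}, whose solution is an edge set rather than a vertex set, so the strong-monotonicity template has to be read through the edge-deletion variant of the framework; the argument, however, remains structurally parallel. Once these two cases are handled, each problem satisfies the hypotheses of the Main Theorem and the linear kernel on $H$-topological-minor-free graphs follows uniformly.
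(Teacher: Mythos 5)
The paper leaves this corollary without an explicit proof (the relevant proof environment is commented out in the source), so your job is precisely to supply the missing verification, and your overall strategy --- check treewidth-bounding and finite integer index case by case, then invoke the Main Theorem --- is exactly the intended one. The treewidth-bounding arguments are correct: $G-X$ is edgeless for \textsc{Vertex Cover}, a forest for \textsc{Feedback Vertex Set}, edgeless after removing the at most $2k$ endpoints for \textsc{Edge Dominating Set}, and for the chordal-type targets the natural clique tree gives $\tw{G-X}<\omega(G-X)<r$. For finite integer index, the route through Proposition~\ref{prop:StronglyMonotone} is right, and your boundary trick for strong monotonicity works once one also records that each target class is closed under disjoint union (so that $(G-X)\uplus (G'-X'-\partial(G'))\in\mathcal P$); this is indeed the case for all classes listed.

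Two small imprecisions are worth fixing. First, for \textsc{Cograph Vertex Deletion} you assert that $G-X$ excludes $K_{r,r}$ as a topological minor. That is not quite what exclusion of $H$ (equivalently of $K_r$) gives you: $K_{r,r}$ only contains $K_{\Theta(\sqrt r)}$ as a topological minor, so what you actually get is exclusion of $K_{s,s}$ for some $s=O(r^2)$, not $s=r$. This does not harm the conclusion --- the treewidth bound is still a function of $H$ alone --- but the constant has to be tracked correctly. With that correction, the cotree argument does close: at every join node the number of children is at most $\omega<r$, at most one child has $\geq s$ vertices, the join depth along any root-to-leaf path is at most $\omega$, and this yields $\tw{G-X}=O(r^3\cdot s)=O(r^5)$ or similar. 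Second, \textsc{Edge Dominating Set} is not literally an edge-deletion problem, so the vertex-deletion strong-monotonicity template does not transfer verbatim; it is cleaner to simply cite that EDS is a strongly monotone $p$-min-CMSO problem over edge sets as established in~\cite{BFLPST09}, which the paper implicitly relies on, rather than redo the argument.
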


This also implies that, by a simple brute force on the kernelized
instance, \textsc{Chordal Vertex Deletion} and \textsc{Interval Vertex Deletion}
are solvable in $O(c^k \poly(n))$ time for some constant $c$. On general graphs
only a $f(k) \poly(n)$ algorithm is known~\cite{Mar10}. 

As an example of a concrete class of problems that satisfy the Main Theorem,
consider a hereditary property $\mathcal{P}$ whose forbidden set
contains all holes. Any graph that satisfies $\mathcal{P}$ must necessarily be 
chordal. The \textsc{$\mathcal{P}$-Vertex Deletion} problem is, given a graph~$G$ 
and an integer~$k$, to decide whether there exists at most~$k$ vertices whose 
deletion results in a graph satisfying $\mathcal{P}$. It is easy to show that 
\textsc{$\mathcal{P}$-Vertex Deletion} is both treewidth-bounding and 
strongly monotone (and hence has finite integer index) on $H$-topological-minor-free 
graphs and therefore admits a linear kernel on such a graph class.  

A natural extension of the problems in Corollary~\ref{cor:concreteProblems} 
is to ask for a \emph{connected} solution. In many cases, however,
the connected version of a problem is not strongly monotone and 
probably does not have finite integer index. For the following problems, however, 
strong monotonicity can be shown easily as any solution contains vertices 
at a constant distance from the boundary.
\begin{corollary}
\textsc{Connected Vertex Cover}, \textsc{Connected Cograph Vertex Deletion} 
and \textsc{Connected Cluster Vertex Deletion} have linear kernels in 
graphs excluding a fixed topological minor.
\end{corollary}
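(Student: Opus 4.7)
The plan is to verify, for each of the three problems, both hypotheses of the Main Theorem on the class of $H$-topological-minor-free graphs; write $r = |V(H)|$. Each problem is expressible as a $p$-min-CMSO problem since connectivity and each of the three target properties (edgeless graph, cluster graph, cograph) are CMSO-definable, so by Proposition~\ref{prop:StronglyMonotone} it suffices to prove treewidth-boundedness and strong monotonicity. Treewidth-boundedness is witnessed by the deletion set $X$ itself: $G - X$ is respectively edgeless, a disjoint union of cliques, or a cograph; the first case gives treewidth $0$; since $H$ is a subgraph of $K_r$ our class excludes $K_r$ as a subgraph, so cluster graphs here have clique size at most $r - 1$ and hence treewidth at most $r - 2$; and a cograph of treewidth exceeding some threshold contains a large $K_{s,s}$ which, for $s$ sufficiently large, contains $K_r$ as a topological minor, so cographs in our class have treewidth bounded by a constant depending only on $r$.

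The main technical step is strong monotonicity. Fix a $t$-boundaried graph $G$ with boundary $B$. Construct $X \subseteq V(G)$ in two stages: first pick a minimum $Y \supseteq B$ such that $G - Y$ has the target property and every component of $G[Y]$ contains a vertex of $B$; then augment $Y$ by at most $t - 1$ short connecting detours through $G - Y$ so that $G[X]$ is connected. For any $t$-boundaried $G'$ and $X' \subseteq V(G')$ with $\zeta_G(G', X')$ finite, a realizing witness $X_0 \cup X'$ is connected in $G \oplus G'$; since $V(G)$ meets $V(G') \setminus B$ only through $B$, the set $X_0$ has at most $|B| \le t$ components in $G$, each meeting $B$. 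Local minimality gives $|Y| \le |X_0| + O(t)$, the augmentation contributes $O(t)$ further vertices, and strong monotonicity follows with $f(t) = O(t)$.

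The principal obstacle is ensuring that the connecting detours do not violate the target property $\mathcal{P}$ of $G - X$. For Connected Vertex Cover this is immediate, since every vertex outside $Y$ has no edges in $G - Y$, so absorbing any path's interior vertices into $X$ preserves edgelessness. For Connected Cluster and Connected Cograph Vertex Deletion the detours must be chosen more carefully, but the bounded treewidth of the relevant region of $G - Y$ combined with the fact that any optimal solution contains vertices at constant distance from the boundary---the ``easy'' feature highlighted in the paper just before the statement---permits detours of length bounded by a function of $r$ and $t$ that preserve $\mathcal{P}$. Once strong monotonicity and treewidth-boundedness are verified, the Main Theorem directly yields a linear kernel for each problem on the class of $H$-topological-minor-free graphs.
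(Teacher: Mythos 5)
Your high-level plan --- verify treewidth-boundedness, establish finite integer index via strong monotonicity and Proposition~\ref{prop:StronglyMonotone}, and invoke the Main Theorem --- is exactly what the paper intends; the paper itself gives nothing beyond the single sentence before the statement, asserting that ``strong monotonicity can be shown easily as any solution contains vertices at a constant distance from the boundary.'' The treewidth-boundedness part of your argument is essentially sound (edgeless graphs have treewidth $0$; cluster graphs in a $K_r$-subgraph-free class have treewidth at most $r-2$; cographs excluding $H$ as a topological minor have treewidth bounded in $r$).

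The strong-monotonicity argument, however, has a genuine gap. The step ``augment $Y$ by at most $t-1$ short connecting detours through $G - Y$ so that $G[X]$ is connected'' is not justified: nothing forces distinct components of $G[Y]$ to lie within bounded distance of one another in $G$, so bounded-length detours need not exist, and $|X|$ could exceed $|Y|$ by far more than $O(t)$. This is precisely the place where the paper's observation about solutions containing vertices at constant distance from the boundary is supposed to do the work, but you never explain how it controls the detour length; your appeal to ``the bounded treewidth of the relevant region of $G - Y$'' does not control diameter. You also misidentify the ``principal obstacle'': since edgeless graphs, cluster graphs, and cographs are all hereditary, adding any detour vertices to the deletion set trivially keeps $G - X$ in $\mathcal P$, so preserving $\mathcal P$ is a non-issue for all three problems, not just for \textsc{Connected Vertex Cover}. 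Finally, two smaller slips: a component of $G[X_0]$ need not ``meet $B$'' --- it may only have a neighbor in $B \cap (X_0 \cup X')$ --- and the inequality $|Y| \le |X_0| + O(t)$ needs an argument that some set built from $X_0$ and $B$ is feasible for the minimization defining $Y$, in particular satisfying your added constraint that every component of $G[Y]$ contain a vertex of $B$, which $X_0 \cup B$ need not do.
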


An interesting property of $H$-topological-minor-free graphs is that the usual 
width measure are essentially the same.
\begin{prop}[{\rm \cite{FOT10}}]\label{prop:HFreeWidth}
	There is a constant\footnote{This is the same constant $\tau$ as used in Proposition~\ref{prop:HFreeCliques}} 
	$\tau$ such that for every $r > 2$, if $G$ is a graph excluding $K_r$ as a topological minor, then
	$$\begin{aligned}
		\rw G \leq \cw G & <  2 \cdot 2^{\tau r \log r} \rw G \\
		\rw G \leq \tw G + 1 & < \frac{3}{4}(r^2 + 4r - 5)	2^{\tau r \log r} \rw G \\  
	\end{aligned}$$
\end{prop}

\noindent This entails the following Corollary of the main result.

\begin{corollary}
	The problem of deleting $k$ vertices such that the remaining graph has bounded clique-, tree- or
	branchwidth has a linear kernel in graphs excluding a fixed topological minor.
\end{corollary}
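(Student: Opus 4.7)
The plan is to verify, for each of the three \textsc{$\mathcal{P}$-Vertex Deletion} variants indicated by the Corollary---taking $\mathcal{P}$ to be $\{G : \tw{G} \le w\}$, $\{G : \mathrm{bw}(G) \le w\}$, or $\{G : \cw{G} \le w\}$ for a fixed constant $w$---that the two hypotheses of the Main Theorem hold on the class of $H$-topological-minor-free graphs, namely (a) treewidth-boundedness and (b) finite integer index. The Corollary then follows immediately from the Main Theorem.

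Treewidth-boundedness is essentially built into the definitions. Given any yes-solution $X$ of $(G,k)$, we have $G - X \in \mathcal{P}$. For treewidth-deletion this says $\tw{G-X} \le w$ directly; for branchwidth-deletion it yields $\tw{G-X} \le \lfloor 3w/2 \rfloor$ via the standard inequality $\tw{G} + 1 \le \tfrac{3}{2}\mathrm{bw}(G)$; and for clique-width-deletion the graph $G - X$ remains $H$-topological-minor-free (the class is closed under vertex deletion), so Proposition~\ref{prop:HFreeWidth} upgrades the bound $\cw{G-X} \le w$ to a treewidth bound depending only on $w$ and $r = |V(H)|$. Thus in each case $|X| \le k$ and $\tw{G-X} \le t$ for a constant $t$ depending only on $\Pi$, with $c = 1$.

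For finite integer index I appeal to Proposition~\ref{prop:StronglyMonotone}: it suffices to show that each variant is a strongly monotone $p$-min-CMSO problem. CMSO-expressibility of the underlying property ``$G - X \in \mathcal{P}$'' is uniform: having tree- or branch-width at most $w$ is characterised by excluding a finite family of minors and is thus MSO$_2$-definable; clique-width at most $w$ is likewise MSO-definable (on $H$-topological-minor-free graphs it is in fact equivalent, via Proposition~\ref{prop:HFreeWidth}, to a treewidth bound). The key step is strong monotonicity. Given a $t$-boundaried graph $G$ with boundary $\partial$, I would let $S \subseteq V(G)$ be a minimum set with $G - S \in \mathcal{P}$ and take $X = S \cup \partial$, so that $|X| \le |S| + t$. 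Since $X \supseteq \partial$, for any $t$-boundaried $G'$ and $X' \subseteq V(G')$ the graph $(G \oplus G') - (X \cup X')$ is the disjoint union of $G - X$ and $G' - (X' \cup \partial)$; the first summand lies in $\mathcal{P}$ by construction, and if $\zeta_G(G', X')$ is witnessed by $Y \subseteq V(G)$, then the second summand is an induced subgraph of $(G \oplus G') - (Y \cup X') \in \mathcal{P}$, hence in $\mathcal{P}$ as well, since all three target classes are closed under induced subgraphs. All three classes are also closed under disjoint unions (their width measures behave as maxima over components), so $X \cup X'$ is a solution to $G \oplus G'$. For the size bound, $Y \cup (X' \cap \partial)$ is itself a solution for $G$ alone (by the same induced-subgraph argument), whence $|S| \le |Y| + t$ and $|X| \le \zeta_G(G', X') + 2t$, giving $f(t) = 2t$.

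The main obstacle I anticipate is precisely this strong monotonicity step, and within it the uniform handling of the three width measures: one has to check that tree-, branch-, and clique-width are each closed under induced subgraphs and each behave as a maximum under disjoint union. These closure properties are standard for all three measures, so once they are in place the universal solution $X = S \cup \partial$ works uniformly, both hypotheses of the Main Theorem are satisfied, and the Corollary follows.
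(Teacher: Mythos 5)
Your overall route---verify treewidth-bounding via the width equivalences of Proposition~\ref{prop:HFreeWidth} and finite integer index via strong monotonicity and Proposition~\ref{prop:StronglyMonotone}, then invoke the Main Theorem---is exactly the intended one, and for the treewidth- and branchwidth-deletion variants your argument goes through: those target classes are minor-closed, hence MSO$_2$-definable via a finite set of forbidden minors, and your universal set $X = S \cup \partial$ argument establishes strong monotonicity (treewidth and branchwidth are monotone even under non-induced subgraphs, so your size bound via $Y \cup (X' \cap \partial)$ is fine there).

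The genuine gap is in the clique-width case, in the step ``clique-width at most $w$ is likewise MSO-definable.'' This is not known to be true, and your parenthetical justification is incorrect: Proposition~\ref{prop:HFreeWidth} does \emph{not} make $\cw{G} \le w$ equivalent to a treewidth bound on $K_r$-topological-minor-free graphs; it only gives two one-sided implications with different constants, so the level set $\set{G : \cw{G} \le w}$ is sandwiched between two treewidth-bounded classes but equal to neither. Absent CMSO-definability of $\cw{G}\le w$ (whose status is tied to the open problem of recognizing clique-width $\le w$ for fixed $w \ge 4$; only rankwidth $\le w$ is known to be CMSO-definable), Proposition~\ref{prop:StronglyMonotone} cannot be applied to the clique-width-deletion variant, so finite integer index for that problem needs a different argument than the one you give. (The treewidth-bounding half of the clique-width case is fine: $\cw{G-X}\le w$ gives $\rw{G-X}\le w$ and then Proposition~\ref{prop:HFreeWidth} bounds $\tw{G-X}$ by a constant, since $G-X$ is still $K_r$-topological-minor-free.) A second, smaller issue also concerns clique-width: in your size bound you argue that $G - (Y \cup (X'\cap\partial))$ lies in $\mathcal P$ because it sits inside $(G\oplus G') - (Y\cup X')$, but gluing may add $G'$-edges between boundary vertices, so this graph is only a (spanning) subgraph, not an induced subgraph, of the corresponding induced piece; clique-width is not monotone under edge deletion. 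This one is easily repaired by deleting all of $\partial$, i.e.\ using $(Y\cap V(G)) \cup \partial$, which gives $|S| \le |Y| + t$ and keeps everything within induced subgraphs---but the CMSO-definability gap remains and is the substantive missing piece.
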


Finally, we can relate our result to bidimensionality in
some natural cases.  Consider a vertex-deletion problem
\textsc{$\mathcal{P}$-Vertex Deletion} for some arbitrary graph
property $\mathcal{P}$. Then our result entails the following.

\begin{corollary}
If \textsc{$\mathcal P$-Vertex Deletion} has finite integer index and is
bidimensional, then it has a linear kernel on graphs excluding a fixed topological minor.
\end{corollary}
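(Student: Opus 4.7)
The plan is to derive this corollary directly from the Main Theorem. Finite integer index is already assumed, so the only missing hypothesis is \emph{treewidth-bounding} on the class of $H$-topological-minor-free graphs. I would therefore establish that bidimensionality of \textsc{$\mathcal{P}$-Vertex Deletion}, together with the structure of $H$-topological-minor-free graphs, automatically yields the treewidth-bounding property.

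Let $(G,k)$ be a yes-instance with $G$ excluding $H$ as a topological minor, and let $S \subseteq V(G)$ with $|S| \le k$ be any solution, that is, $G - S \in \mathcal{P}$. I would take $X := S$ as the witness required by the definition of treewidth-bounding, which gives $|X| \le k$ for free; the task then reduces to showing that $\tw{G - X}$ is bounded by a constant depending only on $\mathcal{P}$ and~$H$. By (minor-)bidimensionality of \textsc{$\mathcal{P}$-Vertex Deletion}, two things hold: (i) the parameter value on the $\ell \times \ell$ grid grows as $\Omega(\ell^2)$, so there exists a constant $\ell_0$ such that the $\ell_0 \times \ell_0$ grid requires at least one deletion to enter $\mathcal{P}$; and (ii) the parameter is minor-monotone, so every minor of a graph with parameter zero again has parameter zero. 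Since $X$ is a valid solution, $G - X$ has parameter zero; hence every minor of $G - X$ lies in $\mathcal{P}$, and in particular $G - X$ cannot contain the $\ell_0 \times \ell_0$ grid as a minor, or we would contradict (i). By the classical Excluded Grid Theorem, this bounds $\tw{G - X}$ by a constant $t = f(\ell_0)$ depending only on the problem. Treewidth-bounding is therefore established, and the Main Theorem delivers the linear kernel.

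The main subtlety I expect is pinning down which flavour of bidimensionality is meant (minor-, contraction-, or topological-bidimensional) and checking that the corresponding monotonicity assumption really forces the exclusion of a large grid-like obstacle in $G - X$. For the minor variant the argument above is clean; for the contraction variant one would instead observe that a large enough grid contains $\Gamma_{\ell_0}$ as a contraction, so the same contradiction applies, and the Excluded Grid Theorem still yields a constant bound on $\tw{G-X}$. Either way, once treewidth-bounding is in hand, the remainder is an immediate invocation of the Main Theorem together with the assumed finite integer index.
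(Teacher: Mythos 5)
Your proposal is correct and follows essentially the same route as the paper: take the solution set as the treewidth-bounding witness, use minor-monotonicity of the bidimensional parameter to conclude that $G-X$ excludes a fixed grid minor, and invoke the grid-minor/treewidth relationship to get a constant treewidth bound before applying the Main Theorem. The paper's version is terser (it does not name the Excluded Grid Theorem explicitly and leaves the minor-/contraction-bidimensionality distinction implicit), but the underlying argument is identical.
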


\begin{proof}
Let $(G,k)$ be a yes-instance with solution set $X \subseteq V(G)$.
Then $G-X \in \mathcal P$, which entails that, for some constant
$c$ depending only on $\mathcal P$, $G-X$ does not contain a
$c \times c$-grid as a minor. Otherwise the solution of $G-X$
would be nonempty: if we could contract $G-X$ into a
grid that itself is not in $\mathcal P$, \ie we need to delete at
least one vertex from it to obtain a graph that has the property
$\mathcal P$, this would contradict the assumption that the problem
is bidimensional.

\noindent Therefore \textsc{$\mathcal P$-Vertex Deletion} is
treewidth-bounding and the above follows.  \qed
\end{proof}

\section{Conclusion and Open Questions}\label{sec:Conclusion}

We have shown that one can obtain linear kernels for a range of problems
on graphs excluding a fixed topological minor. This partially extends the results by
Bodlaender et al.\ on graphs of bounded genus~\cite{BFLPST09} and by Fomin et al.\ 
on graphs excluding a fixed minor~\cite{FLST10}.

Two main questions arise: (1)~can similar results be obtained for an even
larger class of (sparse) graphs and (2)~what other problems have linear kernels on 
$H$-topological-minor free graphs. In particular, does \textsc{Dominating Set}
have a linear kernel on graphs excluding a fixed topological minor?
It would also be interesting to investigate how the structure theorem by
Grohe and Marx can be used in this context~\cite{GM11}.

\bibliographystyle{abbrv}
\bibliography{cross,conf}
\end{document}